\newcommand{\T}{\mathrm{T}}
\newcommand{\pf}{\ensuremath{\mathrm{pf}}}
\newcommand{\hpf}{\ensuremath{\mathrm{hpf}}}
\begin{document}


\title
{Robust Formulation of Wick's theorem for Computing Matrix Elements Between
Hartree--Fock--Bogoliubov Wavefunctions}



\author{Guo P. Chen}
\email[]{guo.chen@rice.edu}
\affiliation{Department of Chemistry, Rice University, Houston, Texas 77005-1892, USA}

\author{Gustavo E. Scuseria}
\affiliation{Department of Chemistry, Rice University, Houston, Texas 77005-1892, USA}
\affiliation{Department of Physics and Astronomy, Rice University, Houston, Texas 77005-1892, USA\looseness=-1}


\date{April 26, 2023. Revised: May 25, 2023}

\begin{abstract}

Numerical difficulties associated with computing matrix elements of operators
between Hartree--Fock--Bogoliubov (HFB) wavefunctions have plagued the development of
HFB-based many-body theories for decades. 
The problem arises from divisions by zero in the standard formulation of the
nonorthogonal Wick's theorem in the limit of vanishing HFB overlap.
In this paper, we present a robust formulation of Wick's theorem that
stays well-behaved regardless of whether the HFB states are orthogonal or not.
This new formulation ensures cancellation between the zeros of the overlap and
the poles of the Pfaffian, which appears naturally in fermionic systems.
Our formula explicitly eliminates self-interaction, which otherwise causes
additional numerical challenges. A computationally efficient version of our
formalism enables robust symmetry-projected HFB calculations with the same
computational cost as mean-field theories.
Moreover, we avoid potentially diverging normalization factors
by introducing a robust normalization procedure.
The resulting formalism treats even and odd number of
particles on equal footing and reduces to Hartree--Fock as a natural limit.
As proof of concept, we present a numerically stable and accurate solution to a
Jordan--Wigner-transformed Hamiltonian, whose singularities motivated the
present work. Our robust formulation of Wick's theorem is a most promising
development for methods using quasiparticle vacuum states.


\end{abstract}

\maketitle 

\section{Introduction}
  \label{sec:intro}

The Hartree--Fock--Bogoliubov (HFB) wavefunction is a fermionic mean-field
ansatz allowing particle number symmetry breaking.\cite{Ring1980Nuclear, Blaizot1986Quantum} It
encompasses the Hartree--Fock (HF) wavefunction in the same sense that
spin-unrestricted HF (UHF) reduces to spin-restricted HF (RHF) without spin
$S^2$ symmetry breaking. HFB-based methods are not as conventional in quantum
chemistry as in nuclear structure theory. This is a consequence of repulsive
electronic interactions in the former, which results in number symmetry not
breaking spontaneously in mean-field.\cite{Lieb1981Variational} 

Nevertheless, the emergence of variation-after-projection approaches to
symmetry breaking and restoration \cite{Sheikh2000PHFB, Schmid2004PHFB,
Scuseria2011PQT, Sheikh2021Restoration} has sparked growing interest in HFB and
related anz{\"a}tze in electronic structure theory. \cite{Scuseria2011PQT,
Naftchi2011Rank, Neuscamman2012Size, Neuscamman2013Jastrow, Carlos2012N,
Uemura2015AGPCI, Takashi2015DMET, Degroote2016POST, Shi2017Stocastic,
Mahajan2019Symmetry, Khamoshi2019AGPRDM,
Henderson2020CorrelatingAGP, Dutta2020GeminalReplacement, Larsson2020MMPS, Johnson2020RG,
Khamoshi2021AGPQC, Lin2021HFB, Khamoshi2021JCC, Dutta2021LCAGP, Khamoshi2023AGPCCQC}
Methods based on symmetry-projected HFB (PHFB) are particularly promising for
describing strongly correlated systems, where traditional HF-based methods
fail.\cite{Scuseria2011PQT, Yang1962ODLRO, Coleman1965AGP, Sager2022CPC}

Essential for the implementation of PHFB-based methods are matrix elements of
many-body operators between nonorthogonal HFB states. Indeed, the PHFB
formalism can be considered a general form of nonorthogonal configuration
interaction (NOCI).\cite{HillWheeler1953, Peierls1957Collective}  The
computation of relevant matrix elements relies on the generalized Wick's
theorem with respect to nonorthogonal HFB states, also known as the
nonorthogonal Wick's theorem.\cite{Onishi1966Generator, Balian1969Wick} 

However, when the overlap between right and left HFB states vanishes, the
standard formula in the nonorthogonal Wick's theorem becomes numerically
ill-conditioned, and eventually ill-defined, causing large round-off errors.
This problem occurs in PHFB calculations\cite{Tajima1992Generator, Anguiano2001Particle} 
and is exacerbated by self-interaction\cite{Perdew1981SIC, Bender2009Particle}
if exchange- or pairing-type contractions are neglected, \textit{e.g.},
within the context of nuclear density functional theory.
\cite{Dobaczewski2007NDFT, Lacroix2009Configuration, Bender2009Particle, Duguet2009Particle}

Our recent work mapping spin systems to fermions using the Jordan--Wigner
transformation\cite{Jordan1928JW, Henderson2022Duality} is another example
where orthogonal HFB wavefunctions arise and lead to large numerical errors. This happens
when fermionic on-site occupations are equal to $1/2$, which is frequently
encountered in frustrated phases of transformed spin systems.

The main objective of this paper is to demonstrate that with proper algorithmic
design, the nonorthogonal Wick's theorem can be extended to the orthogonal
limit, resulting in accurate and efficient computation of matrix elements
between arbitrary HFB states. Related algorithms have been discussed for
HF-based NOCI,\cite{Koch1993LinearSuperposition, Thom2009NOCI, RodriguezLaguna2020Efficient,
Burton2021Wick} 
and our formalism reduces to them in that limit.
Remedies for the numerical problems encountered in the HFB case have been
proposed for specific applications\cite{Tajima1992Generator,
Anguiano2001Particle} 
or under certain restrictions.
\cite{Doenau1998Canonical, Dobaczewski2000GBMZ} However, a universal,
low-scaling formula for robust computation of matrix elements between HFB
states, as the one presented here, was lacking.

\section{Nonorthogonal Wick's Theorem}

We define unnormalized HFB wavefunctions or \textit{quasiparticle vacuum states},
$\ket{\Phi_0}$ and $\ket{\Phi_1}$, as
\begin{equation}
  \label{eq:hfb}
  \ket{\Phi_j} = \prod_{p=1}^M \beta^j_p \ket{-}
\end{equation}
for $j = 0, 1$, where $\ket{-}$ denotes the physical vacuum,
and $M$ is the dimension of the one-particle (spin-orbital) basis.
The Bogoliubov quasiparticle annihilation operators
$\beta^j_p$ are related to the
fermionic particle operators $c^\dag_q$ and $c_q$
through a unitary canonical transformation known as the \textit{Bogoliubov transformation}:
\begin{equation}
  \beta^j_p
  = \sum_{q=1}^M \left(U^{j*}_{qp} c_q + V^{j*}_{qp} c^\dag_q\right).
\end{equation}
To ensure nonvanishing $\ket{\Phi_j}$, we assume that $V^j$ is non-singular,
\textit{i.e.}, all of the canonical orbitals
of $\ket{\Phi_j}$ obtained from the Bloch--Messiah decomposition\cite{Bloch1962Canonical}
of $U^j$ and $V^j$ (\textit{vide infra}) are at least infinitesimally occupied.
We will eliminate this assumption and address the normalization of $\ket{\Phi_j}$
in Sec.~\ref{sec:normalization}.

The nonorthogonal Wick's theorem can be concisely stated
using the \textit{Pfaffian} as follows:\cite{Hu2014MatrixElements}
For $\braket{\Phi_0|\Phi_1} \neq 0$,
a matrix element of a $d$-body operator is expressed as
\begin{equation}
  \label{eq:nonorthogonal_wick}
  \braket{\Phi_0|\gamma^1 \gamma^2 \cdots \gamma^{2d}|\Phi_1}
  = \braket{\Phi_0|\Phi_1} \pf(\Gamma),
\end{equation}
where $\{\gamma^k\}_{1\leq k \leq 2d}$ is a set of arbitrary fermionic operators, and
$\Gamma$ is a $2d \times 2d$ antisymmetric matrix whose strict upper triangular part
is defined by \textit{contractions} of the form
\begin{equation}
  \label{eq:contraction}
  \Gamma_{kl}
  = \contraction[1.5ex]{}{\gamma}{{}^k}{\gamma}
  \gamma^k \gamma^l
  = \frac{\braket{\Phi_0|\gamma^k \gamma^l|\Phi_1}}{\braket{\Phi_0|\Phi_1}}
\end{equation}
for $k < l$. The Pfaffian $\pf(\Gamma)$ automatically generates
all possible full contractions of the fermionic operators with the correct signs,
reflecting the fermionic anticommutation relations.
For example, Eq.~\eqref{eq:nonorthogonal_wick} reduces to the conventional
statement of the nonorthogonal Wick's theorem
\cite{Onishi1966Generator, Balian1969Wick, Ring1980Nuclear, Blaizot1986Quantum}
for the two-particle reduced transition matrix ($2$-RTM),
\cite{Lowdin1955QTMBS1, McWeeny1960DMT, Mazziotti1999Comparison}
or second-order transition density matrix:
\label{eq:2rtm}
\begin{align}
  \nonumber
  \mathcal{D}^{(2)}_{pq,p'q'}
  &= \braket{\Phi_0|c^\dag_p c^\dag_q c_{q'} c_{p'}|\Phi_1}\\
  \nonumber
  &= \braket{\Phi_0|\Phi_1}
  \big(
    \contraction[2ex]{}{c}{{}^\dag_p c^\dag_q c_{q'}}{c}
    \contraction[1.2ex]{c^\dag_p}{c}{{}^\dag_q}{c}
    c^\dag_p c^\dag_q c_{q'} c_{p'}
    + \contraction[2ex]{}{c}{{}^\dag_p c^\dag_q}{c}
    \contraction[1.2ex]{c^\dag_p}{c}{{}^\dag_q c_{q'}}{c}
    c^\dag_p c^\dag_q c_{q'} c_{p'}
    + \contraction[1.2ex]{}{c}{{}^\dag_p}{c}
    \contraction[1.2ex]{c^\dag_p c^\dag_q}{c}{{}_{q'}}{c}
    c^\dag_p c^\dag_q c_{q'} c_{p'}
  \big)\\
  &= \braket{\Phi_0|\Phi_1}
  \big(
    \rho^{01}_{p'p} \rho^{01}_{q'q}
    - \rho^{01}_{p'q} \rho^{01}_{q'p}
    + \kappa^{10*}_{pq} \kappa^{01}_{p'q'}
  \big),
\end{align}
where
\begin{equation}
  \rho^{01}_{pq}
  = \contraction[1.2ex]{}{c}{{}^\dag_q}{c}
  c^\dag_q c_p, \quad
  \kappa^{01}_{pq}
  = \contraction[1.2ex]{}{c}{{}_q}{c}
  c_q c_p, \quad
  \kappa^{10*}_{pq}
  = \contraction[1.2ex]{}{c}{{}^\dag_p}{c}
  c^\dag_p c^\dag_q,
\end{equation}
and we follow the convention that each crossing between contraction lines
introduces a minus sign, which is implied by the properties of the Pfaffian.
In general, we can write $\gamma^k$ as a quasiparticle
\begin{equation}
  \label{eq:gammadef}
  \gamma^k
  = \sum_{q=1}^M \left(A^{*}_{qk} c_q + B^{*}_{qk} c^\dag_q\right),
\end{equation}
where $A$ and $B$ are $M \times 2d$ matrices.
Note that for $k \neq l$, $\gamma^k$ and $\gamma^l$ may come from different
Bogoliubov transformations, hence the superscript in our notation.

We see from Eq.~\eqref{eq:contraction} that the nonorthogonal Wick's theorem becomes
ill-defined when $\braket{\Phi_0|\Phi_1} = 0$. In this case, we may 
fall back to the original Wick's theorem with respect to the physical vacuum.
\cite{Wick1950Wick, Gaudin1960Wick, Lieb1968Pfaffian}
We have\cite{Bertsch2012Pfaffian}
\begin{widetext}
\begin{subequations}
\label{eq:wick_gaudin}
\begin{align}
  \braket{\Phi_0|\gamma^1 \gamma^2 \cdots \gamma^{2d}|\Phi_1}
  &= (-1)^{M(M-1)/2} \braket{-|
    \beta^{0\dag}_{1} \beta^{0\dag}_{2} \cdots \beta^{0\dag}_{M}
    \gamma^1 \gamma^2 \cdots \gamma^{2d}
    \beta^1_1 \beta^1_2 \cdots \beta^1_{M}
  |-}\\
  &= (-1)^{M(M-1)/2}
  \pf\begin{pmatrix}
    V^{0\,\T} U^0  & V^{0\,\T} B^*  & V^{0\,\T} V^{1*}\\
    -B^\dag V^0    & \Gamma^{(-)}   & A^\dag V^{1*}\\
    -V^{1\dag} V^0 & -V^{1\dag} A^* & U^{1\dag} V^{1*}
  \end{pmatrix},
\end{align}
\end{subequations}
\end{widetext}
where $\Gamma^{(-)}$ is a $2d \times 2d$ antisymmetric matrix defined by
\begin{equation}
  \Gamma^{(-)}_{kl}
  = \braket{-|\gamma^k \gamma^l|-}
\end{equation}
for $k < l$.
Compared with Eq.~\eqref{eq:nonorthogonal_wick}, Eq.~\eqref{eq:wick_gaudin}
is always well-defined, but the Pfaffian of a much larger matrix needs to be evaluated,
greatly increasing the computational cost.

In the following, we sketch a proof of the nonorthogonal Wick's theorem
stated in Eq.~\eqref{eq:nonorthogonal_wick}, 
introducing our notation along the way.
Unlike the proof by induction presented in Ref.~\onlinecite{Hu2014MatrixElements},
our proof follows a more direct approach inspired by Ref.~\onlinecite{Mizusaki2012Formulation}.
We start by rewriting Eq.~\eqref{eq:wick_gaudin}
using the properties of the Pfaffian:
\begin{equation}
  \label{eq:pf_m}
  \braket{\Phi_0|\gamma^1 \gamma^2 \cdots \gamma^{2d}|\Phi_1}
  = (-1)^{M(M-1)/2} \pf(\mathcal{M}),
\end{equation}
where
\begin{align}
  \mathcal{M} 
  = \begin{pmatrix}
    \mathcal{S}     & \mathcal{G}\\
    -\mathcal{G}^\T & \Gamma^{(-)}
  \end{pmatrix},
\end{align}
\begin{align}
  \label{eq:def_s_g}
  \mathcal{S}
  = \begin{pmatrix}
    V^{0\,\T} U^0  & V^{0\,\T} V^{1*}\\
    -V^{1\dag} V^0 & U^{1\dag} V^{1*}
  \end{pmatrix}, \quad
  \mathcal{G}
  = \mathcal{V}^\T \mathcal{C}^*
\end{align}
and
\begin{equation}
  \label{eq:def_v_c}
  \mathcal{V}
  = \begin{pmatrix}
    0   & -V^{1*}\\
    V^0 & 0
  \end{pmatrix},
  \quad
  \mathcal{C}
  = \begin{pmatrix}
    A\\ B
  \end{pmatrix}.
\end{equation}
When $\braket{\Phi_0|\Phi_1} \neq 0$, the $2M \times 2M$ antisymmetric matrix $\mathcal{S}$
is non-singular since\cite{Bertsch2012Pfaffian}
\begin{equation}
  \label{eq:overlap}
  \braket{\Phi_0|\Phi_1}
  = (-1)^{M(M-1)/2} \pf(\mathcal{S}).
\end{equation}
We therefore have the following Pfaffian identity
\begin{align}
  \label{eq:pf_block_m}
  \pf(\mathcal{M})
  = \pf(\mathcal{S}) \pf(\mathcal{M} / \mathcal{S}),
\end{align}
where $\mathcal{M}/\mathcal{S}$ denotes the Schur complement of the block $\mathcal{S}$
of the supermatrix $\mathcal{M}$, \textit{i.e.},
\begin{align}
  \label{eq:superrdm}
  \mathcal{M} / \mathcal{S}
  = \Gamma^{(-)} + \mathcal{G}^\T \mathcal{S}^{-1} \mathcal{G}.
\end{align}
On the other hand, as shown in detail in the supplementary material,
we find
\begin{equation}
  \label{eq:defk}
  \mathcal{K}
  = \mathcal{V} \mathcal{S}^{-1} \mathcal{V}^\T
  = \begin{pmatrix}
    -\kappa^{01} & -\rho^{01}\\
    \rho^{01\,\T} & \kappa^{10*}
  \end{pmatrix},
\end{equation}
and it follows that
\begin{equation}
  \label{eq:gamma_schur}
  \mathcal{M}/\mathcal{S}
  = \Gamma^{(-)} + \mathcal{C}^\dag \mathcal{K} \mathcal{C}^*
  = \Gamma.
\end{equation}
Inserting Eqs.~\eqref{eq:overlap} and \eqref{eq:gamma_schur}
into Eqs.~\eqref{eq:pf_m} and \eqref{eq:pf_block_m} completes the proof.

Without loss of generality, we hereafter assume that
$\gamma^1 \gamma^2 \cdots \gamma^{2d}$ are 
particle operators in normal order with respect to the physical vacuum.
Thus, we have $\Gamma^{(-)} = 0_{2d \times 2d}$ and
\begin{equation}
  \label{eq:normal_ordered}
  \braket{\Phi_0|\gamma^1 \gamma^2 \cdots \gamma^{2d}|\Phi_1}
  = (-1)^{M(M-1)/2} \pf(\mathcal{S}) \pf(\mathcal{G}^\T \mathcal{S}^{-1} \mathcal{G}).
\end{equation}
These particular matrix elements contain those that arise in the $d$-particle or $d$th-order
reduced transition matrix \mbox{($d$-RTM)},
\cite{Lowdin1955QTMBS1, McWeeny1960DMT, Mazziotti1999Comparison}
\begin{equation}
  \mathcal{D}^{(d)}_{p_1 p_2 \cdots p_d, q_1 q_2 \cdots q_d}
  = \braket{\Phi_0|c^\dag_{p_1} c^\dag_{p_2} \cdots c^\dag_{p_d} c_{q_d} \cdots c_{q_2} c_{q_1}|\Phi_1}.
\end{equation}

Eq.~\eqref{eq:normal_ordered} is reminiscent of L{\"o}wdin's formula for $d$-RTM between HF wavefunctions,
\cite{Lowdin1955QTMBS2}
for which singular value decomposition (SVD) can be used to extract the zeros and poles
in order to evaluate the zero-overlap limit.
\cite{Koch1993LinearSuperposition, Thom2009NOCI, RodriguezLaguna2020Efficient, Burton2021Wick}
We show in the next section that this idea can be generalized to the HFB case.

\section{Robust Wick's Theorem}
\label{sec:robust_wick}

The antisymmetric matrix $\mathcal{S}$ can be written in canonical form,
\cite{Hua1944Theory, Wimmer2012Algorithm923}
\begin{equation}
  \label{eq:canon}
  \mathcal{S} =
  \mathcal{Q} \bar{\mathcal{S}} \mathcal{Q}^\T,
\end{equation}
where $\mathcal{Q}$ is unitary and
\begin{align}
  \label{eq:canon_vals}
  \bar{\mathcal{S}}
  = \begin{pmatrix}
    0 & \text{diag}(s)\\
    -\text{diag}(s) & 0
  \end{pmatrix}.
\end{align}
The elements of the vector $s$ are nonnegative
and they reduce to the singular values of the overlap matrix in the HF case,
as shown in the supplementary material. 
Moreover, it follows from Eq.~\eqref{eq:overlap} that
\begin{equation}
  \label{eq:overlap_in_s}
  \braket{\Phi_0|\Phi_1}
  = \zeta \prod_r s_r,
\end{equation}
where $\zeta = \det(\mathcal{Q})$ is a complex phase factor.
We observe that the poles of $\mathcal{S}^{-1}$ in Eq.~\eqref{eq:normal_ordered}
get cancelled out by the zeros of the overlap,
hinting at a more general and robust formula for computing matrix elements.

Define
\begin{equation}
  \tilde{\mathcal{K}}^r
  = \begin{pmatrix}
    -\tilde{\kappa}^{01,r} & -\tilde{\rho}^{01,r}\\
    \left(\tilde{\rho}^{01,r}\right)^\T & \left(\tilde{\kappa}^{10,r}\right)^*
  \end{pmatrix}
\end{equation}
where
\begin{subequations}
\label{eq:new_contractions}
\begin{align}
  \tilde{\rho}^{01,r}_{pq}
  &= \left[L_{11}\right]_{pr} \left[L_{22}\right]_{qr}
  - \left[L_{12}\right]_{pr} \left[L_{21}\right]_{qr},\\
  \tilde{\kappa}^{01,r}_{pq}
  &= \left[L_{11}\right]_{pr} \left[L_{12}\right]_{qr}
  - \left[L_{12}\right]_{pr} \left[L_{11}\right]_{qr},\\
  \left(\tilde{\kappa}^{10,r}_{pq}\right)^*
  &= \left[L_{22}\right]_{pr} \left[L_{21}\right]_{qr}
  - \left[L_{21}\right]_{pr} \left[L_{22}\right]_{qr},
\end{align}
\end{subequations}
and
\begin{equation}
  \label{eq:defl}
  \mathcal{L}
  = \begin{pmatrix}
    L_{11} & L_{12}\\
    L_{21} & L_{22}
  \end{pmatrix}
  = \mathcal{V} \mathcal{Q}^*
\end{equation}
It is straightforward to show
\begin{equation}
  \Gamma = \sum_r s_r^{-1} \tilde{\Gamma}^r, \quad
  \tilde{\Gamma}^r
  = \mathcal{C}^\dag \tilde{\mathcal{K}}^r \mathcal{C}^*.
  \label{eq:def_gammat_r}
\end{equation}

We now propose a robust formulation of Wick's theorem for computing matrix elements between HFB wavefunctions:
\begin{align}
  \nonumber
  &\braket{\Phi_0|\gamma^1 \gamma^2 \cdots \gamma^{2d}|\Phi_1}
  = (2d-1)!!\, \zeta\\
  &\quad \times \sum_{r_1 r_2 \cdots r_d} \lambda_{r_1 r_2 \cdots r_d} \,
  \hpf(\tilde{\Gamma}^{r_1} \otimes \tilde{\Gamma}^{r_2} \otimes \cdots \otimes \tilde{\Gamma}^{r_d})
  \label{eq:robust_wick}
\end{align}
with
\begin{equation}
  \label{eq:deflambda}
  \lambda_{r_1 \cdots r_d} =
  \begin{cases}
    \displaystyle{\prod_{r \neq r_1, \cdots, r_d} s_r} &
    (r_1, \cdots, r_d \;\text{are distinct})\\
    \quad\quad\; 0 & (\text{otherwise})
  \end{cases}
\end{equation}
\begin{align}
  \nonumber
  &\hpf(\tilde{\Gamma}^{r_1} \otimes \tilde{\Gamma}^{r_2} \otimes \cdots \otimes \tilde{\Gamma}^{r_d})
  = \\
  &\quad\: \frac{1}{(2d)!} \sum_{\sigma \in S_{2d}} \text{sgn}(\sigma)
  \tilde{\Gamma}^{r_1}_{\sigma(1) \sigma(2)}
  \tilde{\Gamma}^{r_2}_{\sigma(3) \sigma(4)}
  \cdots
  \tilde{\Gamma}^{r_d}_{\sigma(2d-1) \sigma(2d)}
  \label{eq:generalized_pfaffian}
\end{align}
where $\otimes$ denotes tensor product, $\sigma$ enumerates permutations in the symmetric group $S_{2d}$,
and $\text{sgn}(\sigma)$ is the signature of $\sigma$.
Eq.~\eqref{eq:generalized_pfaffian} defines a special case of the
\textit{hyper-Pfaffian},\cite{Barvinok1995Hyperpfaffian, Luque2002Pfaffian} denoted by $\hpf(\cdot)$,
which generalizes the Pfaffian to tensors.
For the general definition of the hyper-Pfaffian,
we refer the reader to the supplementary material.

The proof of the theorem is presented in the supplementary material.
We emphasize that the formula in Eq.~\eqref{eq:robust_wick} remains
well-behaved with vanishing $\braket{\Phi_0|\Phi_1}$ because
all potential divisions by zero have been prevented by construction.
Specifically, all $s_r^{-1}$ factors are cancelled out by the factors of $\braket{\Phi_0|\Phi_1}$
in Eq.~\eqref{eq:overlap_in_s}, while all terms involving $s_r^{-2}$, $s_r^{-3}$,
\textit{etc.}, drop out due to the cancellation of self-interaction.
The latter is guaranteed by the properties of the Pfaffian and is further enforced numerically
by setting $\lambda_{r_1 \cdots r_d}$ to zero when its indices are not
distinct. From the definition of $\lambda_{r_1 \cdots r_d}$,
we can also see that the matrix element vanishes if the number of zero elements
in $s$ exceeds $d$, consistent with the Slater--Condon rules between HF wavefunctions;
\cite{Slater1929Spectra, Condon1930Spectra, Lowdin1955QTMBS1}
see the next section for generalization to the HFB case.

In practical calculations, we do not need to explicitly evaluate the hyper-Pfaffian for
each individual matrix element.
Instead, Eq.~\eqref{eq:robust_wick} generates expressions
of high-order RTMs
in terms of the modified contractions defined in Eq.~\eqref{eq:new_contractions}.
As an example, the robust Wick's theorem implies the following expression of the $2$-RTM:
\begin{align}
  \nonumber
  \mathcal{D}^{(2)}_{pq,p'q'}
  &= \braket{\Phi_0|
    \contraction[2ex]{}{c}{{}^\dag_p c^\dag_q c_{q'}}{c}
    \contraction[1.2ex]{c^\dag_p}{c}{{}^\dag_q}{c}
    c^\dag_p c^\dag_q c_{q'} c_{p'}
    + \contraction[2ex]{}{c}{{}^\dag_p c^\dag_q}{c}
    \contraction[1.2ex]{c^\dag_p}{c}{{}^\dag_q c_{q'}}{c}
    c^\dag_p c^\dag_q c_{q'} c_{p'}
    + \contraction[1.2ex]{}{c}{{}^\dag_p}{c}
    \contraction[1.2ex]{c^\dag_p c^\dag_q}{c}{{}_{q'}}{c}
    c^\dag_p c^\dag_q c_{q'} c_{p'}
  |\Phi_1}\\
  \nonumber
  &= \zeta \sum_{r_1 r_2} \lambda_{r_1 r_2}
  \Bigl(
    \tilde{\rho}^{01,r_1}_{p'p} \tilde{\rho}^{01,r_2}_{q'q}
    - \tilde{\rho}^{01,r_1}_{p'q} \tilde{\rho}^{01,r_2}_{q'p}\\
  &\qquad\qquad\qquad\:
    + \left(\tilde{\kappa}^{10,r_1}_{pq}\right)^* \tilde{\kappa}^{01,r_2}_{p'q'}
  \Bigr),
  \label{eq:robust_2tdm}
\end{align}
where we have used $\lambda_{r_1 r_2} = \lambda_{r_2 r_1}$.
Computing the full \mbox{$2$-RTM} using Eq.~\eqref{eq:robust_2tdm} scales as $\mathcal{O}(M^5)$.
In practice, however, we never construct
the full $2$-RTM. To compute the matrix element of a two-body operator,
we can directly contract its parameters (\textit{e.g.,} two-electron integrals)
with the factorized form of the $2$-RTM in Eq.~\eqref{eq:robust_2tdm}.
For example, the tensor contraction between the two-electron integrals and the $2$-RTM
should scale as $\mathcal{O}(M^3)$ to $\mathcal{O}(M^4)$ after exploiting locality
using standard techniques.
\cite{Haser1989DSCF, Strout1995Scaling, Challacombe1997LinearScaling}

\section{Low-Scaling Version}
\label{sec:ls_robust_wick}

We may further reduce the scaling by
enforcing the cancellation only of the smallest elements in $s$.
We partition $\Gamma$
into singular ($S$) and regular ($R$) parts according to
\begin{equation}
  \label{eq:gamma_partitioned}
  \Gamma
  = \Gamma^S + \Gamma^R
  = \sum_{r \in S} s_r^{-1} \tilde{\Gamma}^r + \Gamma^R,
\end{equation}
where $S = \left\{s_r < \epsilon \mid 1 \leq r \leq M, \, \epsilon > 0\right\}$.
The value of $\epsilon$ should be so chosen that $\Gamma^R$
is well-conditioned while the size of $S$ is small.
Similarly, we define
$\rho^{01,R}$, $\kappa^{01,R}$, and $\kappa^{10,R}$
as the regular parts of $\rho^{01}$, $\kappa^{01}$, and $\kappa^{10}$,
respectively.
By the properties of the hyper-Pfaffian,
we can now establish a low-scaling robust Wick's theorem,
which states that
\begin{widetext}
\begin{align}
  \nonumber
  \braket{\Phi_0|\gamma^1 \gamma^2 \cdots \gamma^{2d}|\Phi_1}
  &= \zeta \lambda^R
  \Biggl(
    \lambda^S \pf(\Gamma^R)
    + (2d-1)!!\, {d \choose 1} \sum_{r_1 \in S} \lambda^S_{r_1} \,
    \hpf\left(\tilde{\Gamma}^{r_1} \otimes \left(\Gamma^R\right)^{\otimes (d-1)}\right)\\
    \nonumber
    &\qquad\qquad + (2d-1)!!\, {d \choose 2} \sum_{r_1, r_2 \in S} \lambda^S_{r_1 r_2} \,
    \hpf\left(\tilde{\Gamma}^{r_1} \otimes \tilde{\Gamma}^{r_2} \otimes \left(\Gamma^R\right)^{\otimes (d-2)}\right)\\
    &\qquad\qquad
    + \cdots
    + (2d-1)!! \sum_{r_1, r_2, \cdots, r_d \in S} \lambda^S_{r_1 r_2 \cdots r_d} \,
    \hpf(\tilde{\Gamma}^{r_1} \otimes \tilde{\Gamma}^{r_2} \otimes \cdots \otimes \tilde{\Gamma}^{r_d})
  \Biggr),
  \label{eq:ls_robust_wick}
\end{align}
\end{widetext}
where
\begin{subequations}
\begin{equation}
  \lambda^S = \prod_{r \in S} s_r,
  \quad
  \lambda^R = \prod_{r \notin S} s_r,
\end{equation}
\begin{equation}
  \lambda^S_{r_1 \cdots r_d} =
  \begin{cases}
    \displaystyle{\prod_{S \ni r \neq r_1, \cdots, r_d} s_r} &
    (r_1, \cdots, r_d \;\text{are distinct})\\
    \quad\quad\; 0 & (\text{otherwise})
  \end{cases}
\end{equation}
\end{subequations}
and we have used
\begin{equation}
  \hpf(\underbrace{\Gamma^R \otimes \cdots \otimes \Gamma^R \vphantom{\frac{a}{b}}}_d)
  = \hpf\left(\left(\Gamma^R\right)^{\otimes d}\right)
  = \frac{\pf(\Gamma^R)}{(2d-1)!!}
\end{equation}
If $S$ is empty, Eq.~\eqref{eq:ls_robust_wick}
reduces to the nonorthogonal Wick's theorem
in Eq.~\eqref{eq:nonorthogonal_wick}.

Applying the low-scaling robust Wick's theorem to \mbox{$2$-RTM}, we have
\begin{align}
  \nonumber
  &\braket{\Phi_0|
    \contraction[2ex]{}{c}{{}^\dag_p c^\dag_q c_{q'}}{c}
    \contraction[1.2ex]{c^\dag_p}{c}{{}^\dag_q}{c}
    c^\dag_p c^\dag_q c_{q'} c_{p'}
  |\Phi_1}
  =\\
  \nonumber
  &\quad \zeta \lambda^R \Biggl(
    \lambda^S \rho^{01,R}_{pq} \rho^{01,R}_{p'q'}
    + \sum_{r_1 \in S} \lambda^S_{r_1} \tilde{\rho}^{01,r_1}_{pq} \rho^{01,R}_{p'q'}\\
    &\qquad\quad
    + \sum_{r_1 \in S} \lambda^S_{r_1} \rho^{01,R}_{pq} \tilde{\rho}^{01,r_1}_{p'q'}
    + \sum_{r_1, r_2 \in S} \lambda^S_{r_1 r_2} \tilde{\rho}^{01,r_1}_{pq} \tilde{\rho}^{01,r_2}_{p'q'}
  \Biggr)
\end{align}
and similarly for 
$\braket{\Phi_0|
  \contraction[2ex]{}{c}{{}^\dag_p c^\dag_q}{c}
  \contraction[1.2ex]{c^\dag_p}{c}{{}^\dag_q c_{q'}}{c}
  c^\dag_p c^\dag_q c_{q'} c_{p'}
|\Phi_1}$
and
$\braket{\Phi_0|
  \contraction[1.2ex]{}{c}{{}^\dag_p}{c}
  \contraction[1.2ex]{c^\dag_p c^\dag_q}{c}{{}_{q'}}{c}
  c^\dag_p c^\dag_q c_{q'} c_{p'}
|\Phi_1}$.
Each term in the above expression stays factorized, facilitating
low-scaling tensor contractions with the electron integrals.
Besides, the indices $r_1$, $r_2$ now only run over
a small subset of $s$ elements. As a result, the computational cost
of the matrix element of a two-body operator
scales the same as that of HF, \textit{i.e.}, $\mathcal{O}(M^2)$ to $\mathcal{O}(M^3)$,
albeit with a larger prefactor.
For $d$-RTM with $2 < d \ll M$, the reduction of computational scaling
from Eq.~\eqref{eq:robust_wick} to Eq.~\eqref{eq:ls_robust_wick}
is even more significant.

We can now generalize the Slater--Condon rules to the HFB case
using the low-scaling robust Wick's theorem.
Let $\mu$ be the number of $s$ elements that are strictly zero.
Loosely speaking, $\mu$ equals the smallest number of levels
we need to block in both $\ket{\Phi_0}$ and $\ket{\Phi_1}$
in order to make them nonorthogonal.
The blocking should be done in a biorthogonal basis
that simultaneously brings both $\ket{\Phi_0}$ and $\ket{\Phi_1}$
to the canonical form of the Bardeen--Cooper--Schrieffer (BCS) ansatz.
\cite{Burzynski1995Quadruple, Doenau1998Canonical, Dobaczewski2000GBMZ}
This biorthogonal basis is generally nonunitary and does not always exist as noted in
Ref.~\onlinecite{Dobaczewski2000GBMZ}.
It is readily seen from Eq.~\eqref{eq:ls_robust_wick} that
the $\lambda^S$ term vanishes for $\mu = 1$,
the $\lambda^S$ and $\lambda^S_{r_1}$ terms vanish for $\mu = 2$,
and so forth.
These results resemble the generalized Slater--Condon rules for HF matrix elements,
\cite{Verbeek1991GSC, Thom2009NOCI, Burton2021Wick}
in which case $\mu$ corresponds to the smallest number of biorthogonal occupied orbital
pairs that need to be emptied in order to make the HF wavefunctions nonorthogonal.
Here the biorthogonalization is guaranteed by L{\"o}wdin pairing and realized by SVD.
\cite{Amos1961SingleDeterminant, Karadakov1985Pairing, Mayer2010Pairing}
We should point out that, as opposed to the generalized Slater--Condon rules,
our Eq.~\eqref{eq:ls_robust_wick}
also ensures numerical stability when some elements of $s$
are small but not strictly zero.

\section{Robust Normalization}
\label{sec:normalization}

So far we have considered unnormalized HFB wavefunctions $\ket{\Phi_j}$ for $j = 0, 1$ 
as defined in Eq.~\eqref{eq:hfb}.
The corresponding normalized HFB wavefunction is\cite{Bertsch2012Pfaffian}
\begin{equation}
  \label{eq:normalized_hfb}
  \ket{\Phi'_j}
  = \frac{\det(C^j)}{\prod_{p=1}^{m^j_P} v^j_p} \ket{\Phi_j},
\end{equation}
where $C^j$ and $v^j$ come from the Bloch--Messiah decomposition,\cite{Bloch1962Canonical}
\begin{align}
  U^j = D^j \bar{U}^j C^j, \quad
  V^j = \left(D^j\right)^* \bar{V}^j C^j
\end{align}
with $D^j$ and $C^j$ being unitary and
\begin{subequations}
\label{eq:ubarvbar}
\begin{align}
  \bar{U}^j
  & = \begin{pmatrix}
    0_{M^j_C \times M^j_C} & & \\
    & -\displaystyle{\bigoplus_{p=1}^{m^j_P}} i \sigma_y u^j_p & \\
    & & I_{M^j_V}
  \end{pmatrix},\\
  \bar{V}^j
  &= \begin{pmatrix}
    I_{M^j_C} & & \\
    & \displaystyle{\bigoplus_{p=1}^{m^j_P}} I_2 v^j_p & \\
    & & 0_{M^j_V \times M^j_V}
  \end{pmatrix}.
\end{align}
\end{subequations}
Here $I_k$ denotes the $k \times k$ identity matrix, and
$\sigma_y$ denotes the $y$-component of the $2 \times 2$ Pauli matrices;
$u^j_p, v^j_p$ are positive real numbers and satisfy
$\left(u^j_p\right)^2 + \left(v^j_p\right)^2 = 1$.
Note that we use an unconventional definition of $(\bar{U}^j, \bar{V}^j)$ in
Eq.~\eqref{eq:ubarvbar}, which is physically inconsequential
but will
ease the notation in what follows. The Bloch--Messiah decomposition
implies that, in the canonical basis of $\ket{\Phi_j}$ with orbital coefficients $D^j$,
we have $M^j_C$ fully occupied core orbitals, $2 m^j_P$ paired fractionally occupied
orbitals, and $M^j_V$ unoccupied or virtual orbitals, with
$M^j_C + 2 m^j_P + M^j_V = M$.

The normalization factor in Eq.~\eqref{eq:normalized_hfb} is generally unbounded,
which may introduce additional numerical challenges.
To overcome this problem, we define a set of unnormalized quasiparticles by an unnormalized
Bogoliubov transformation:
\begin{equation}
  \beta^{j\prime}_p
  = \sum_{q=1}^M \left(\left(U^{j\prime}_{qp}\right)^* c_q + \left(V^{j\prime}_{qp}\right)^* c^\dag_q\right)
\end{equation}
for $1 \leq p \leq M^j_C + 2 m^j_P$, with
\begin{align}
\label{eq:defuvprime}
  U^{j\prime}
  = D^{j\prime} \bar{U}^{j\prime} \left(\bar{V}^{j\prime}\right)^{-\frac{1}{2}}, \quad
  V^{j\prime}
  = \left(D^{j\prime}\right)^* \left(\bar{V}^{j\prime}\right)^{\frac{1}{2}},
\end{align}
where $D^{j\prime}$ is built from the first $M^j_C + 2 m^j_P$ columns of $D^j$ 
as suggested in Ref.~\onlinecite{Robledo2011Technical}, and
\begin{subequations}
\label{eq:defuvbarprime}
\begin{align}
  \bar{U}^{j\prime}
  &= \begin{pmatrix}
    0_{M^j_C \times M^j_C} & \\
    & -\displaystyle{\bigoplus_{p=1}^{m^j_P}} i \sigma_y u^j_p
  \end{pmatrix},\\
  \bar{V}^{j\prime}
  &= \begin{pmatrix}
    I_{M^j_C} & \\
    & \displaystyle{\bigoplus_{p=1}^{m^j_P}} I_2 v^j_p
  \end{pmatrix}.
\end{align}
\end{subequations}
It is readily shown that the normalized HFB wavefunction $\ket{\Phi'_j}$ is
the vacuum state of these unnormalized quasiparticles, \textit{i.e.},
\begin{equation}
  \label{eq:normalized_hfb2}
  \ket{\Phi'_j}
  = \prod_{p=1}^{M^j_C + 2 m^j_P} \beta^{j\prime}_p \ket{-}.
\end{equation}
Furthermore,
\begin{subequations}
\label{eq:defsprime}
\begin{align}
  \mathcal{S}'
  &= \begin{pmatrix}
    \left(V^{0\prime}\right)^\T U^{0\prime} &
    \left(V^{0\prime}\right)^\T \left(V^{1\prime}\right)^*\\
    -\left(V^{1\prime}\right)^\dag V^{0\prime} &
    \left(U^{1\prime}\right)^\dag \left(V^{1\prime}\right)^*
  \end{pmatrix}\\
  &= \begin{pmatrix}
    \bar{U}^{0\prime} &
    \left(V^{0\prime}\right)^\T \left(V^{1\prime}\right)^*\\
    -\left(V^{1\prime}\right)^\dag V^{0\prime} &
    -\bar{U}^{1\prime}
  \end{pmatrix},
\end{align}
\end{subequations}
whose elements are bounded by $1$ in absolute value
even though $U^{j\prime}$ is generally unbounded.
Therefore, we no longer have to deal with the problematic normalization factor
in Eq.~\eqref{eq:normalized_hfb} as long as we compute the overlap
between normalized HFB wavefunctions using
\begin{equation}
  \braket{\Phi'_0|\Phi'_1}
  = (-1)^{(M^0_C + 2 m^0_P)(M^0_C + 2 m^0_P - 1)/2} \pf(\mathcal{S}').
\end{equation}

This procedure for computing the overlap is similar to the one proposed in
Ref.~\onlinecite{Carlsson2021Overlap};
however, we emphasize that our formalism,
as opposed to those in Refs.~\onlinecite{Bertsch2012Pfaffian} and \onlinecite{Carlsson2021Overlap},
treats even- and odd-particle systems
on the same footing, since the number of core orbitals $M^j_C$ from the Bloch--Messiah decomposition
is not restricted to be even.
Other formalisms for equally accounting for even and odd number parities have been suggested,
without addressing the numerical pitfalls of the unbounded normalization factors.
\cite{Avez2012Overlap, Jin2022MPS, Yang2023Projected}

More importantly, our robust normalization procedure for computing the overlap
is compatible with the robust Wick's theorem and its low-scaling version.
Namely, Eqs.~\eqref{eq:robust_wick} and \eqref{eq:ls_robust_wick} remain valid for
the $d$-RTM between the normalized $\ket{\Phi'_0}$ and $\ket{\Phi'_1}$ 
after formally replacing $(U^j, V^j)$ with $(U^{j\prime}, V^{j\prime})$,
or equivalently replacing $\mathcal{S}$ with $\mathcal{S}'$ and
$\mathcal{V}$ with
\begin{equation}
  \mathcal{V}'
  = \begin{pmatrix}
    0   & -\left(V^{1\prime}\right)^*\\
    V^{0\prime} & 0
  \end{pmatrix}.
\end{equation}
This correspondence is apparent from Eq.~\eqref{eq:normal_ordered},
whose derivation does not rely on the normalization of the Bogoliubov transformations.

Finally, we note in passing that, when $m^j_P = 0$, $\ket{\Phi'_j}$ reduces to a HF wavefunction,
so the robust Wick's theorem and its low-scaling variant can also be used to compute
HF reduced transition matrices (see the supplementary material).
All pairing-type contractions are removed in this case due to vanishing $\kappa^{01}$
and $\kappa^{10}$.

\begin{figure*}[t]
\includegraphics[width=\columnwidth]{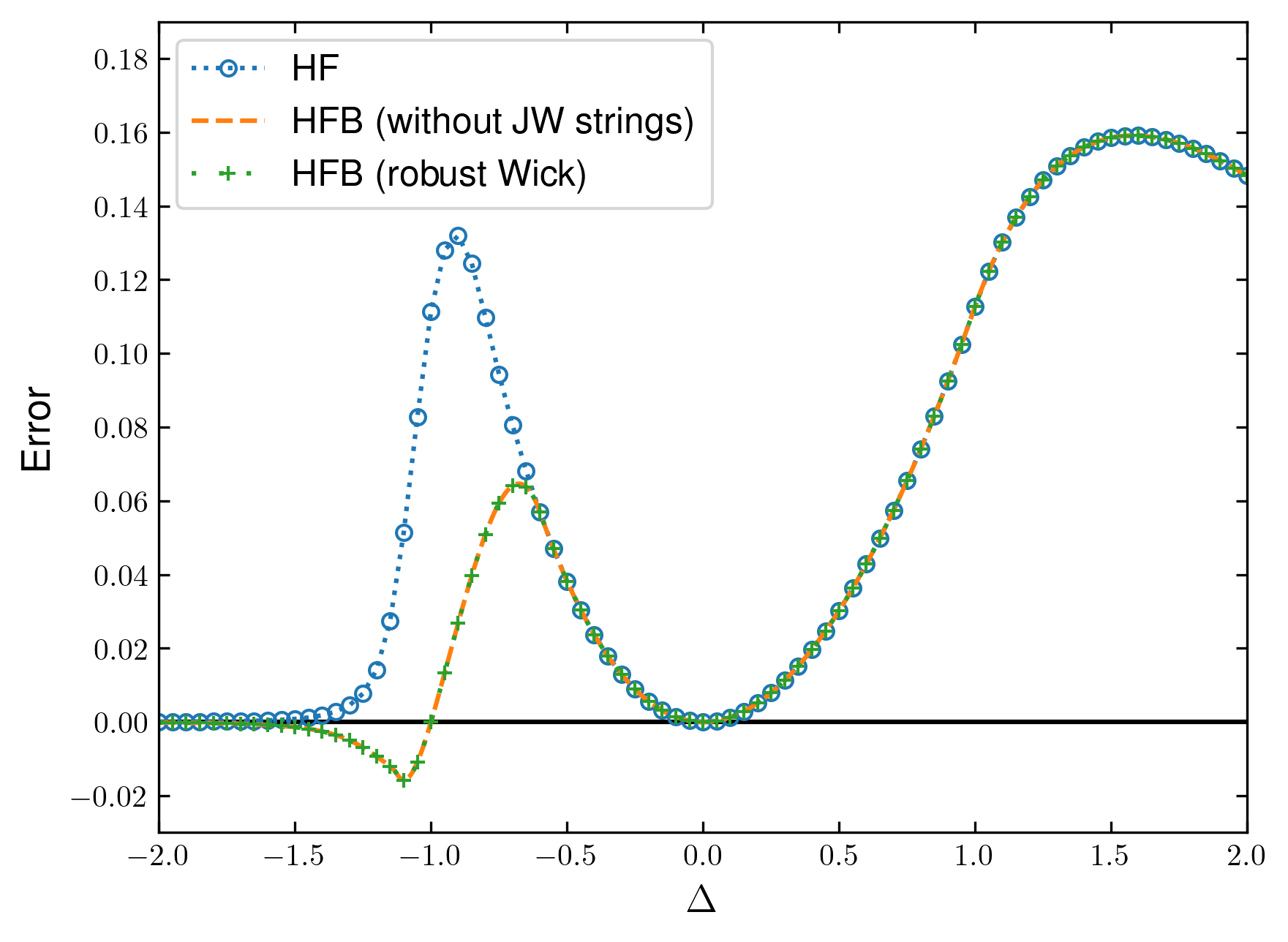}
\hfill
\includegraphics[width=\columnwidth]{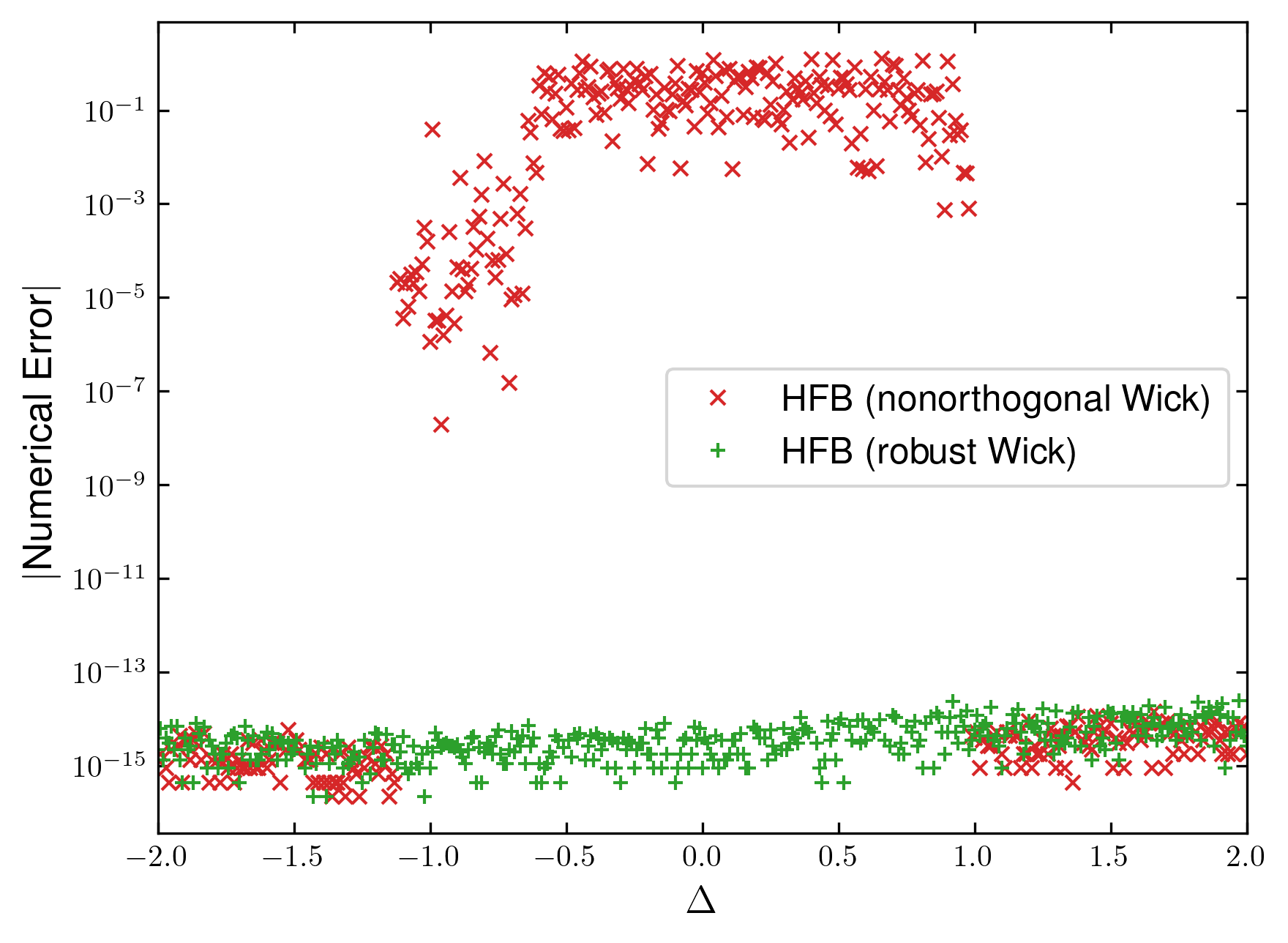}
\caption{Energy errors for the 8-site XXZ chain with $\braket{S^z} = 0$.
Left panel: Errors with respect to the full configuration interaction (FCI) energy. Hartree--Fock (HF) results are plotted to show symmetry breaking. Right panel: Numerical errors in absolute value with respect to the Hartree--Fock--Bogoliubov (HFB) energy computed without Jordan--Wigner (JW) strings according to Eq.~\eqref{eq:1djw_matrix_element}.}
\label{fig:robust_vs_naive}
\end{figure*}

\section{Numerical Example}

We here present a prototypical example to demonstrate the power of our proposed robust Wick's theorem. The Heisenberg XXZ Hamiltonian can be treated as a fermionic system after Jordan--Wigner (JW) transformation. The claim to fame of this mapping is that in 1D at $\Delta=0$, a maximally degenerate point, the transformed Hamiltonian becomes a free fermion system readily solvable by mean-field HF theory. Here, $\Delta$ is the anisotropy parameter. We refer the reader to Ref.~\onlinecite{Henderson2022Duality} for details of the transformed Hamiltonian. For $\Delta<0$, number symmetry breaks spontaneously, necessitating HFB solutions. Matrix elements of the form 
\begin{equation}
  \label{eq:jw_matrix_element}
  \braket{\Phi|c^\dag_p c_q|\Phi^{pq}}
  = \braket{\Phi|c^\dag_p c_q \phi_p \phi_q|\Phi}
\end{equation}
need to be computed for $1 \leq p < q \leq M$, where the JW string
\begin{equation}
  \phi_p 
  = \prod_{q < p} (1 - 2 n_q)
  = \exp(i \pi \sum_{q < p} n_q)
\end{equation}
is a Thouless rotation acting on $\ket{\Phi}$ with $n_q = c^\dag_q c_q$. 
The overlap
\begin{equation}
  \braket{\Phi|\Phi^{pq}}
  = \braket{\Phi|\phi_p \phi_q|\Phi}
  = \braket{\Phi|\prod_{r=p}^{q-1} (1 - 2 n_r)|\Phi}
\end{equation}
vanishes when $\braket{\Phi|n_r|\Phi} = 1/2$ for $p \leq r < q$, a situation we observe across the critical
$-1 \lesssim \Delta \lesssim 1$ region of the 1D phase diagram. On the other hand, because XXZ possess only
nearest neighbor interactions, one can avoid dealing with JW strings all together and analytically show that
\begin{equation}
  \label{eq:1djw_matrix_element}
  \braket{\Phi|c^\dag_p c_{p+1}|\Phi^{p,p+1}}
  = -\braket{\Phi|c^\dag_p c_{p+1}|\Phi}.
\end{equation}
The right hand side of Eq.~\eqref{eq:1djw_matrix_element} is numerically well-posed because $\braket{\Phi|\Phi} = 1$,
after applying the normalization procedure in Sec.~\ref{sec:normalization}. Therefore, we have two analytically equivalent
forms of evaluating the same matrix elements. Comparing numerical results from Eq.~\eqref{eq:jw_matrix_element},
which generates orthogonal HFB states, and Equation\eqref{eq:1djw_matrix_element}, we can measure the round-off errors
due to vanishing $\braket{\Phi|\Phi^{pq}}$ and verify that our robust Wick's theorem eliminates them.

Consider a JW-transformed 8-site XXZ chain with open boundary conditions. We constrain $\braket{\Phi|\sum_p n_p|\Phi} = M/2$,
which corresponds to $\braket{S^z} = 0$. We find that $\braket{\Phi|\Phi^{p,p+1}} = 0$ for all $1 \leq p < M$ with $-1.12 \leq \Delta \leq 0.98$.
We observe equal site occupations ($\braket{S^z_p} = 0$ in the original spin representation) within this range of $\Delta$ values,
consistent with our previous calculations using a spin antisymmetrized geminal power ansatz.\cite{Liu2023SAGP}
As shown in Fig.~\ref{fig:robust_vs_naive}, the HFB energies obtained from Eq.~\eqref{eq:jw_matrix_element} using
the nonorthogonal Wick's theorem indeed suffer from severe numerical error in this regime. In contrast,
the robust Wick's theorem, yields energies on top of the exact results computed without JW strings using Eq.~\eqref{eq:1djw_matrix_element}.

\section{Concluding Remarks}

In quantum mechanical calculations with nonorthogonal basis,
negligible overlaps do not necessarily imply negligible interactions,
as one might erroneously infer from the nonorthogonal Wick's
theorem in Eq.~\eqref{eq:nonorthogonal_wick}, where the matrix element
is proportional to the overlap $\braket{\Phi_0|\Phi_1}$. The fallacy originates
from the fact that
Eq.~\eqref{eq:nonorthogonal_wick} is ill-defined when $\braket{\Phi_0|\Phi_1} = 0$,
as shown by our formal and numerical analyses.
Accurate computation of the matrix element requires evaluating the limit of
Eq.~\eqref{eq:nonorthogonal_wick} as $\braket{\Phi_0|\Phi_1} \to 0$,
which is always well-defined and can be nonzero in general.

The idea of taking the limit before numerical evaluation is 
a recurring theme of this paper. It underlies how we resolve the two
major limitations of the standard formulation of the nonorthogonal Wick's theorem:
\begin{itemize}
  \item First, when $\braket{\Phi_0|\Phi_1}$ vanishes, the Pfaffian in
    Eq.~\eqref{eq:nonorthogonal_wick} develop poles even after
    self-interaction is eliminated. We have addressed this issue in
    Sec.~\ref{sec:robust_wick} by cancelling these poles with the zeros
    of the overlap. The resulting robust formulation of Wick's theorem
    is universally applicable and numerically stable for computing matrix elements between HFB
    wavefunctions. And it remains amenable
    to efficient implementation as shown in Sec.~\ref{sec:ls_robust_wick}.

  \item Second, the normalization factors for $\ket{\Phi_0}$ and $\ket{\Phi_1}$
    may become divergingly large. We have addressed this issue in
    Sec.~\ref{sec:normalization} by introducing a robust normalization procedure.
    This is
    achieved by formally viewing normalized HFB wavefunctions as
    vacuum states of unnormalized quasiparticles.
    Although the coefficients
    defining these unnormalized quasiparticles are unbounded in general, the
    $\mathcal{S}'$ and $\mathcal{V}'$ matrices entering the final expressions
    for the matrix elements remain bounded.
\end{itemize}

We expect the present work to have a broad impact in quantum many-body theories,
facilitating future developments of methods based on quasiparticle vacuum states.
One promising area of application is HFB-based NOCI or generator coordinate method (GCM) in general.
\cite{HillWheeler1953, Peierls1957Collective}
Not only do these methods recover important ground-state correlations,
\cite{Scuseria2011PQT, Uemura2015AGPCI}
but they
are also expected to be suitable for describing excited states, particularly those of charge transfer or 
double excitation character.
\cite{Sundstrom2014NOCI, Nite2019saResHF, Mahler2021ooNOCI}
Moreover, PHFB is closely related to geminal theories.
\cite{Scuseria2011PQT, Johnson2013101SizeConsistent, Johnson2017OpenShell, Cassam2023Geminal}
Our formalism can be applied to compute matrix elements between number-projected HFB (NHFB) or
antisymmetrized geminal power (AGP) states, especially for ones that do not share the same set of
canonical orbitals.
Another potential application is correlated methods using HFB or PHFB as the reference state.
\cite{Duguet2017PCC, Qiu2019PBCSCC, Baran2021Variational, Henderson2020CorrelatingAGP, 
Khamoshi2021JCC}
Reduced transition matrices encountered in these methods are typically of high order, for which the
low-scaling version of our robust Wick's theorem becomes useful.


%
%

%

\section*{Supplementary Material}

The supplementary material contains definitions of the Pfaffian and hyper-Pfaffian
along with detailed proofs of equations and theorems in the main text. 
We also show therein how our robust formulation of Wick's theorem reduces to
the RTM expressions in Ref.~\onlinecite{Koch1993LinearSuperposition}
in the Hartree--Fock limit.

\begin{acknowledgments}

This work was supported by the U.S. National Science Foundation under Grant No.~CHE-2153820.
G.E.S.~is a Welch Foundation Chair (Grant No.~C-0036).
G.P.C.~thanks Thomas M.~Henderson for helpful discussions and useful comments.

\end{acknowledgments}

\bibliography{ref}

\end{document}


\renewcommand{\thesubsection}{\thesection.\arabic{subsection}}

\maketitle

\section{Definitions}%

We follow our notation and definitions in the main text.
Additional definitions are introduced below to facilitate
proofs and discussions in subsequent sections.

\subsection{Pfaffian and Hyper-Pfaffian}

\begin{definition}[Pfaffian]
  The Pfaffian of a $2d \times 2d$ antisymmetric matrix $Z$ is defined as
  \begin{equation}
    \label{eq:pfaffian}
    \pf(Z)
    = \frac{1}{2^d d!} \sum_{\sigma \in S_{2d}} \text{sgn}(\sigma)
    \prod_{p=1}^d Z_{\sigma(2p-1) \sigma(2p)}.
  \end{equation}
  For a general antisymmetric matrix of dimension $n \times n$, we set its Pfaffian to $0$
  if $n$ is odd.
\end{definition}

\begin{definition}[Hyper-Pfaffian]
  Let $\mathcal{Z}^{[k]}$ be a $k$-index tensor of dimension $n^k$ that satisfies
  $n = k l$ for some positive integer $l$ and
  \begin{equation}
    \mathcal{Z}^{[k]}_{p_1, p_2, \cdots, p_k}
    = \text{sgn}(\sigma)
    \mathcal{Z}^{[k]}_{\sigma(p_1), \sigma(p_2), \cdots, \sigma(p_k)}
  \end{equation}
  for $\forall \sigma \in S_k$.
  The hyper-Pfaffian of $\mathcal{Z}^{[k]}$ is defined as
  \begin{equation}
    \label{eq:hyperpfaffian}
    \hpf(\mathcal{Z}^{[k]})
    = \frac{1}{(k!)^l l!} \sum_{\sigma \in S_n} \text{sgn}(\sigma)
    \prod_{p=1}^l \mathcal{Z}^{[k]}_{\sigma(pk-k+1),\sigma((pk-k+2),\cdots,\sigma(pk)}.
  \end{equation}
\end{definition}

Note that this definition is consistent with the one by Luque and Thibon \cite{Luque2002Pfaffian}
and differ from Barvinok's definition \cite{Barvinok1995Hyperpfaffian} by a factor of $(k!)^{-l}$.
It is readily seen that our definition of the hyper-Pfaffian reduces to that of the Pfaffian when
$k = 2$.

These definitions of the Pfaffian and the hyper-Pfaffian can be straightforwardly extended to
any unstructured square matrix $Z$ and $n \times n \times \cdots \times n$ tensor $\mathcal{Z}^{[k]}$,
respectively. In particular, for the $2d$-index tensor 
$\tilde{\Gamma}^{r_1} \otimes \tilde{\Gamma}^{r_2} \otimes \cdots \otimes \tilde{\Gamma}^{r_d}$
in Equation~(28), we have $k = n = 2d$, $l = 1$, and therefore
\begin{subequations}
\label{eq:hyperpfaffian_special}
\begin{align}
  \hpf(\tilde{\Gamma}^{r_1} \otimes \tilde{\Gamma}^{r_2} \otimes \cdots \otimes \tilde{\Gamma}^{r_d})
  &= \frac{1}{(2d)!} \sum_{\sigma \in S_{2d}} \text{sgn}(\sigma)
  \left[
    \tilde{\Gamma}^{r_1} \otimes \tilde{\Gamma}^{r_2} \otimes \cdots \otimes \tilde{\Gamma}^{r_d}
  \right]_{\sigma(1), \sigma(2), \cdots, \sigma(2d)}\\
  &= \frac{1}{(2d)!} \sum_{\sigma \in S_{2d}} \text{sgn}(\sigma)
  \tilde{\Gamma}^{r_1}_{\sigma(1) \sigma(2)}
  \tilde{\Gamma}^{r_2}_{\sigma(3) \sigma(4)}
  \cdots
  \tilde{\Gamma}^{r_d}_{\sigma(2d-1) \sigma(2d)},
\end{align}
\end{subequations}
assuming the following tensor product convention:
\begin{equation}
  \left[
    \mathcal{Z}^{[k]} \otimes \tilde{\mathcal{Z}}^{[\tilde{k}]}
  \right]_{p_1, p_2, \cdots, p_k, q_1, q_2, \cdots, q_{\tilde{k}}}
  = \mathcal{Z}^{[k]}_{p_1, p_2, \cdots, p_k}
  \tilde{\mathcal{Z}}^{[\tilde{k}]}_{q_1, q_2, \cdots, q_{\tilde{k}}}.
\end{equation}

\subsection{Miscellaneous}
\label{sub:misc}

The unitary matrix
\begin{equation}
  \mathcal{W}^j
  = \begin{pmatrix}
    U^j & V^{j*}\\
    V^j & U^{j*}
  \end{pmatrix}
\end{equation}
represents the Bogoliubov transformation from $(c^\dag, c)$ to $(\beta^{j\dag}, \beta^j)$,
for $j = 0,1$. It is readily seen that the Bogoliubov transformation from $(\beta^{0\dag}, \beta^0)$ to
$(\beta^{1\dag}, \beta^1)$ is represented by
\begin{equation}
  \label{eq:trans0to1}
  \mathcal{T}
  = \mathcal{W}^{0\dag} \mathcal{W}^1
  = \begin{pmatrix}
    \Xi & \Lambda^*\\
    \Lambda & \Xi^*
  \end{pmatrix},
\end{equation}
where
\begin{subequations}
\begin{align}
  \label{eq:def_xi}
  \Xi
  &= U^{0\dag} U^1 + V^{0\dag} V^1,\\
  \label{eq:def_lambda}
  \Lambda
  &= U^{0\,\T} V^1 + V^{0\,\T} U^1.
\end{align}
\end{subequations}
Namely,
\begin{align}
  \label{eq:beta0tobeta1}
  \beta^1_p
  = \sum_{q=1}^M \left(\Xi^*_{qp} \beta^0_q + \Lambda^*_{qp} \beta^{0\dag}_q\right).
\end{align} 

We also define
\begin{equation}
  \tilde{\mathcal{S}}
  = \mathcal{J} \mathcal{S}
  = \begin{pmatrix}
    -V^{1\dag} V^0 & U^{1\dag} V^{1*}\\
    V^{0\,\T} U^0  & V^{0\,\T} V^{1*}
  \end{pmatrix},
\end{equation}
where
\begin{equation}
  \mathcal{J} =
  \begin{pmatrix}
    0   & I_M\\
    I_M & 0
  \end{pmatrix}.
\end{equation}

\section{Proof of Equation~(17)}%

Under the assumption that $V^j$ is non-singular for $j = 0, 1$ and $\braket{\Phi_0|\Phi_1} \neq 0$,
we first establish the following lemmas.

\begin{lemma}
  $\Xi$, defined in Equation~\eqref{eq:def_xi}, is non-singular.
\end{lemma}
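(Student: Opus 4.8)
The plan is to prove the contrapositive: assuming $\Xi$ is singular, I will show that the overlap $\braket{\Phi_0|\Phi_1}$ must vanish, which contradicts the standing hypothesis $\braket{\Phi_0|\Phi_1}\neq 0$. The entire argument is carried out at the level of the quasiparticle operators rather than their matrix blocks, which has the advantage of not requiring any overlap (Onishi- or Pfaffian-type) formula, none of which is available at this point in the development.

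First I would fix a nonzero vector $x$ with $\Xi x = 0$ and form the single quasiparticle operator $A = \sum_{p} \bar{x}_p \beta^1_p$. Substituting the expansion \eqref{eq:beta0tobeta1} and using $\Xi^* \bar x = \overline{\Xi x} = 0$, the $\beta^0$-annihilation contribution cancels completely, leaving
\begin{equation}
  A = \sum_{p} \bar x_p \beta^1_p = \sum_{q} \overline{(\Lambda x)}_q \, \beta^{0\dag}_q .
\end{equation}
Thus $A$ admits two readings: it is a linear combination of the $\beta^1$ annihilation operators, and simultaneously a linear combination of the $\beta^0$ creation operators. The first gives $A\ket{\Phi_1} = 0$ since $\beta^1_p \ket{\Phi_1} = 0$; the second gives $\bra{\Phi_0} A = 0$ since $\bra{\Phi_0}\beta^{0\dag}_q = \bigl(\beta^0_q\ket{\Phi_0}\bigr)^\dag = 0$.

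The decisive step is then the canonical anticommutation relations inherited from the unitarity of $\mathcal{W}^0$ and $\mathcal{W}^1$. Evaluated in the $\beta^1$ basis, $\{A, A^\dag\} = \sum_p |x_p|^2 = \lVert x\rVert^2$ is a nonzero multiple of the identity because $x \neq 0$. (Evaluated in the $\beta^0$ basis it equals $\lVert \Lambda x\rVert^2$, so the two agree by unitarity of $\mathcal{T}$, which also confirms $A\neq 0$.) Sandwiching between $\bra{\Phi_0}$ and $\ket{\Phi_1}$,
\begin{equation}
  \lVert x\rVert^2 \braket{\Phi_0|\Phi_1}
  = \bra{\Phi_0}\{A, A^\dag\}\ket{\Phi_1}
  = \bra{\Phi_0} A\, A^\dag \ket{\Phi_1} + \bra{\Phi_0} A^\dag A \ket{\Phi_1} = 0,
\end{equation}
where the first term dies by $\bra{\Phi_0}A = 0$ and the second by $A\ket{\Phi_1}=0$. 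Since $\lVert x\rVert^2 > 0$, this forces $\braket{\Phi_0|\Phi_1} = 0$, establishing the contrapositive.

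I expect the main obstacle to be conceptual rather than computational: recognizing that a null vector of $\Xi$ yields an operator $A$ that is at once a $\beta^1$-annihilator and a pure $\beta^0$-creator is exactly what makes the two vacuum conditions collide in the anticommutator identity. The accompanying bookkeeping — verifying that $\Xi x = 0$ is precisely what removes the annihilation piece, and stating cleanly the vacuum conditions $\beta^j\ket{\Phi_j}=0$ and the anticommutation relations — is then routine. I would note in passing that this route uses neither the nonsingularity of $V^j$ nor any overlap formula; a purely matrix-algebraic alternative would derive $|\braket{\Phi_0|\Phi_1}|^2 = |\det\Xi|$ directly, but since that identity is essentially what the later results bootstrap, the operator argument above is the cleaner and less circular choice.
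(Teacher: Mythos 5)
Your proof is correct, but it takes a genuinely different route from the paper's. The paper stays entirely at the matrix level: it invokes the Bertsch--Robledo Pfaffian formula $\braket{\Phi_0|\Phi_1} = \pf(\mathcal{S})$, squares it to convert to a determinant, and uses a Schur-complement factorization of $\tilde{\mathcal{S}}$ to derive the Onishi-type identity $\left(\braket{\Phi_0|\Phi_1}\right)^2 = \det(V^{1\dag} V^0)\det(\Xi^*)$, from which non-singularity of $\Xi$ follows given $\braket{\Phi_0|\Phi_1}\neq 0$ and non-singular $V^j$. Your contrapositive argument instead works at the operator level: a null vector $x$ of $\Xi$ produces an operator $A$ that is simultaneously a combination of $\beta^1$ annihilators and of $\beta^0$ creators, so the anticommutator sandwich $\lVert x\rVert^2\braket{\Phi_0|\Phi_1} = \bra{\Phi_0}\{A,A^\dag\}\ket{\Phi_1} = 0$ forces the overlap to vanish. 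Your approach is more elementary and slightly more general --- it needs only the quasiparticle vacuum conditions and the canonical anticommutation relations, not the external Pfaffian overlap formula nor the non-singularity of $V^0$ and $V^1$ --- so it establishes the lemma under weaker hypotheses and with no reliance on results that could appear circular. What it does not deliver is the collateral material the paper extracts from its computation: the Schur-complement identity \eqref{eq:s_schur1} is reused in the subsequent proof of Equation~(17), and the Onishi formula \eqref{eq:onishi} is of independent interest, so on your route these would still have to be derived separately. One small inaccuracy in your closing remark: for the unnormalized states used here, the matrix-algebraic identity is not $|\braket{\Phi_0|\Phi_1}|^2 = |\det\Xi|$ but carries the extra factor $\det(V^{1\dag}V^0)$, exactly as in \eqref{eq:onishi}; this is tangential and does not affect the validity of your proof.
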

\begin{proof}
  According to Ref.~\cite{Bertsch2012Pfaffian}, and using the non-singularity of $V^{1\dag} V^0$,
  we have
  \begin{subequations}
    \begin{align}
      \left(\braket{\Phi_0|\Phi_1}\right)^2
      &= \pf(\mathcal{S})^2 = \det(\mathcal{S}) = \det(\mathcal{J}) \det(\tilde{\mathcal{S}})\\
      &= (-1)^M \det
      \begin{pmatrix}
        -V^{1\dag} V^0 & U^{1\dag} V^{1*}\\
        V^{0\,\T} U^0  & V^{0\,\T} V^{1*}
      \end{pmatrix}\\
      &= (-1)^M \det(-V^{1\dag} V^0) \det(\tilde{\mathcal{S}}/(-V^{1\dag} V^0))\\
      &= \det(V^{1\dag} V^0) \det(\tilde{\mathcal{S}}/(-V^{1\dag} V^0)),
    \end{align}
  \end{subequations}
  where the Schur complement of $-V^{1\dag} V^0$ of the supermatrix $\tilde{\mathcal{S}}$
  is written as
  \begin{subequations}
  \label{eq:s_schur1}
  \begin{align}
    \tilde{\mathcal{S}}/(-V^{1\dag} V^0)
    &= V^{0\,\T} V^{1*} - V^{0\,\T} U^0 (-V^{1\dag} V^0)^{-1} U^{1\dag} V^{1*}\\ 
    &= V^{0\,\T} V^{1*} + U^{0\,\T} V^0 (V^{1\dag} V^0)^{-1} V^{1\dag} U^{1*}\\ 
    &= \Xi^*.
  \end{align}
  \end{subequations}
  In the last equality, we have used $U^{j\,\T} V^j + V^{j\,\T} U^j = 0$, which is a consequence
  of the unitarity of $\mathcal{W}^j$.
  We therefore obtain
  \begin{equation}
    \label{eq:onishi}
    \left(\braket{\Phi_0|\Phi_1}\right)^2
    = \det(V^{1\dag} V^0) \det(\Xi^*).
  \end{equation}
  It follows from $\braket{\Phi_0|\Phi_1} \neq 0$ and the non-singularity of $V^0$ and $V^1$
  that $\Xi$ is also non-singular.
\end{proof}

Equation~\eqref{eq:onishi} is consistent with the Onishi formula \cite{Onishi1966Generator}.
Note that the sign ambiguity \cite{Robledo2009Sign, Mizusaki2018Sign} 
is irrelevant here since we have squared the overlap.
The additional $\det(V^{1\dag} V^0)$ factor originates from the fact that $\ket{\Phi_0}$
and $\ket{\Phi_1}$ are unnormalized.

\begin{lemma}
  \begin{subequations}
  \begin{align}
    \rho^{01}_{pq}
    &= \contraction[1.2ex]{}{c}{{}^\dag_q}{c}
    c^\dag_q c_p
    = \left[V^{1*} \left(\Xi^{-1}\right)^* V^{0\,\T}\right]_{pq},\\
    \label{eq:kappa01def}
    \kappa^{01}_{pq}
    &= \contraction[1.2ex]{}{c}{{}_q}{c}
    c_q c_p
    = \left[V^{1*} \left(\Xi^{-1}\right)^* U^{0\,\T}\right]_{pq},\\
    \label{eq:kappa10def}
    \kappa^{10*}_{pq}
    &= \contraction[1.2ex]{}{c}{{}^\dag_p}{c}
    c^\dag_p c^\dag_q
    = -\left[U^{1*} \left(\Xi^{-1}\right)^* V^{0\,\T}\right]_{pq}.
  \end{align}
  \end{subequations}
\end{lemma}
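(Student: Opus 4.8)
The plan is to read each contraction as the normalized transition matrix element $\braket{\Phi_0|AB|\Phi_1}/\braket{\Phi_0|\Phi_1}$ of the relevant operator pair $A,B$, and to reduce all three formulas to one common building block. First I would invert $\mathcal{W}^j$ (using its unitarity) to expand the bare operators in either quasiparticle basis, $c_p=\sum_q(V^{j*}_{pq}\beta^{j\dag}_q+U^j_{pq}\beta^j_q)$ and $c^\dag_p=\sum_q(U^{j*}_{pq}\beta^{j\dag}_q+V^j_{pq}\beta^j_q)$. Acting on $\ket{\Phi_1}$ the $\beta^1$ terms die, and acting on $\bra{\Phi_0}$ the $\beta^{0\dag}$ terms die, so in every case the operator adjacent to $\ket{\Phi_1}$ is replaced by its $\beta^{1\dag}$ part and the one adjacent to $\bra{\Phi_0}$ by its $\beta^0$ part. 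This leaves each contraction as an outer pair of coefficient matrices times the single quantity $X_{lk}=\braket{\Phi_0|\beta^0_l\beta^{1\dag}_k|\Phi_1}/\braket{\Phi_0|\Phi_1}$: for $\rho^{01}$ the outer factors are $V^0$ (from $c^\dag_q$) and $V^{1*}$ (from $c_p$); for $\kappa^{01}$ they are $U^0$ and $V^{1*}$; for $\kappa^{10*}$ they are $V^0$ (from $c^\dag_p$) and $U^{1*}$ (from $c^\dag_q$).

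Next I would compute $X$. Writing Equation~\eqref{eq:beta0tobeta1} in matrix form, $\beta^1=\Xi^\dag\beta^0+\Lambda^\dag\beta^{0\dag}$, and imposing $\beta^1_k\ket{\Phi_1}=0$, the non-singularity of $\Xi$ (the preceding lemma) gives the key relation $\beta^0_m\ket{\Phi_1}=\sum_n Z_{mn}\beta^{0\dag}_n\ket{\Phi_1}$ with $Z=-(\Xi^\dag)^{-1}\Lambda^\dag$. Together with $\bra{\Phi_0}\beta^{0\dag}_m=0$ this yields the two elementary contractions $\braket{\Phi_0|\beta^0_l\beta^{0\dag}_m|\Phi_1}=\delta_{lm}\braket{\Phi_0|\Phi_1}$ and $\braket{\Phi_0|\beta^0_l\beta^0_m|\Phi_1}=Z_{ml}\braket{\Phi_0|\Phi_1}$. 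Expanding $\beta^{1\dag}_k=\Xi^{\T}\beta^{0\dag}+\Lambda^{\T}\beta^0$ in the $\beta^0$ basis then gives $X=\Xi+Z^{\T}\Lambda=\Xi-\Lambda^*(\Xi^*)^{-1}\Lambda$.

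The crux is to collapse this to a clean inverse. Here I would use the unitarity of $\mathcal{T}$ from Equation~\eqref{eq:trans0to1}, i.e.\ $\Xi^\dag\Xi+\Lambda^\dag\Lambda=I_M$ and $\Xi^\dag\Lambda^*+\Lambda^\dag\Xi^*=0$. Multiplying $X$ on the left by $\Xi^\dag$ and substituting both relations gives $\Xi^\dag X=(I_M-\Lambda^\dag\Lambda)+\Lambda^\dag\Lambda=I_M$, hence $X=(\Xi^\dag)^{-1}$. I expect this step, the quadratic Bogoliubov identities plus the subsequent conjugate/transpose bookkeeping, to be the main obstacle, since it is where the raw $\Lambda$-dependence cancels and where sign or ordering slips most easily occur.

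Finally I would reassemble. For $\rho^{01}$ and $\kappa^{01}$, substituting $X=(\Xi^\dag)^{-1}$ and contracting the outer $V^{1*}$ with $V^0$ (resp.\ $U^0$), the transposes combine so that $(\Xi^\dag)^{-1}$ becomes $(\Xi^{-1})^*$, giving $\rho^{01}=V^{1*}(\Xi^{-1})^*V^{0\,\T}$ and $\kappa^{01}=V^{1*}(\Xi^{-1})^*U^{0\,\T}$ directly. For $\kappa^{10*}$ the same substitution produces $V^0(\Xi^\dag)^{-1}U^{1\dag}$, whose transpose is exactly $U^{1*}(\Xi^{-1})^*V^{0\,\T}$; the remaining minus sign in the claimed formula then follows from the antisymmetry $\kappa^{10*}_{pq}=-\kappa^{10*}_{qp}$ forced by $\{c^\dag_p,c^\dag_q\}=0$, which converts $[U^{1*}(\Xi^{-1})^*V^{0\,\T}]_{qp}$ into $-[U^{1*}(\Xi^{-1})^*V^{0\,\T}]_{pq}$.
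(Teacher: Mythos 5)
Your proof is correct, and its skeleton matches the paper's: both expand the operator adjacent to each vacuum in the corresponding quasiparticle basis, so that all three contractions reduce to outer coefficient matrices times the single cross-overlap $X_{rs}=\braket{\Phi_0|\beta^0_r\beta^{1\dag}_s|\Phi_1}/\braket{\Phi_0|\Phi_1}$, both lean on the non-singularity of $\Xi$ from the preceding lemma, and both fix the sign in the $\kappa^{10*}$ formula by appealing to antisymmetry. Where you genuinely diverge is in evaluating $X$. The paper inverts Equation~\eqref{eq:beta0tobeta1} to write $\beta^0_r=\sum_t\left(\left[\Xi^{-1}\right]^*_{tr}\beta^1_t-\left[\Lambda\Xi^{-1}\right]^*_{tr}\beta^{0\dag}_t\right)$; substituting this mixed-basis expression, the $\Lambda$-dependent term dies against $\bra{\Phi_0}$ and $X$ collapses to $\left(\Xi^\dag\right)^{-1}$ in one line via $\braket{\Phi_0|\beta^1_t\beta^{1\dag}_s|\Phi_1}=\delta_{ts}\braket{\Phi_0|\Phi_1}$, so no quadratic identity is ever needed. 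You instead stay entirely in the $\beta^0$ representation: you derive the Thouless-like relation $\beta^0_m\ket{\Phi_1}=\sum_n Z_{mn}\beta^{0\dag}_n\ket{\Phi_1}$ with $Z=-\left(\Xi^\dag\right)^{-1}\Lambda^\dag$, obtain the raw form $X=\Xi-\Lambda^*\left(\Xi^*\right)^{-1}\Lambda$, and then must invoke the unitarity blocks of $\mathcal{T}$, namely $\Xi^\dag\Xi+\Lambda^\dag\Lambda=I_M$ and $\Xi^\dag\Lambda^*+\Lambda^\dag\Xi^*=0$, to collapse it to $\left(\Xi^\dag\right)^{-1}$; all of your intermediate identities check out. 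Your route is longer and, as you yourself anticipated, concentrates the risk in that cancellation step, but it buys something the paper's argument leaves implicit: it exhibits the Thouless structure of $\ket{\Phi_1}$ relative to the $\beta^0$ vacuum and shows explicitly how the $\Lambda$ dependence cancels. The paper's inversion trick buys brevity by arranging for the $\Lambda$ term to be annihilated by $\bra{\Phi_0}$ before any contraction is taken.
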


\begin{proof}
  These are well-established results\cite{Ring1980Nuclear}.
  Here we provide a proof just for completeness.

  Equation~\eqref{eq:beta0tobeta1} and the non-singularity of $\Xi$ imply
  \begin{equation}
    \beta^0_p = \sum_q
    \left(
      \left[\Xi^{-1}\right]^*_{qp} \beta^1_q
      - \left[\Lambda \Xi^{-1}\right]^*_{qp} \beta^{0\dag}_q
    \right).
  \end{equation}
  It follows that
  \begin{subequations}
  \begin{align}
    \braket{\Phi_0|c^\dag_q c_p|\Phi_1}
    &= \sum_{rs}
    \braket{\Phi_0|
      \left(U^{0*}_{qr} \beta^{0\dag}_r + V^0_{qr} \beta^0_r\right)
      \left(V^{1*}_{ps} \beta^{1\dag}_s + U^1_{ps} \beta^1_s\right)
    |\Phi_1}\\
    &= \sum_{rs} V^0_{qr} V^{1*}_{ps} \braket{\Phi_0|\beta^0_r \beta^{1\dag}_s|\Phi_1}\\
    &= \sum_{rst} V^0_{qr} V^{1*}_{ps} 
    \braket{\Phi_0|
      \left(
        \left[\Xi^{-1}\right]^*_{tr} \beta^1_t
        - \left[\Lambda \Xi^{-1}\right]^*_{tr} \beta^{0\dag}_t
      \right) \beta^{1\dag}_s
    |\Phi_1}\\
    &= \sum_{rst} V^{1*}_{ps} \left[\Xi^{-1}\right]^*_{tr} V^0_{qr}
    \braket{\Phi_0|\beta^1_t \beta^{1\dag}_s|\Phi_1}\\
    &= \sum_{rst} V^{1*}_{ps} \left[\Xi^{-1}\right]^*_{tr} V^0_{qr}\, \delta_{ts}
    \braket{\Phi_0|\Phi_1}\\
    &= \left[V^{1*} \left(\Xi^{-1}\right)^* V^{0\,\T}\right]_{pq}
    \braket{\Phi_0|\Phi_1},
  \end{align}
  \end{subequations}
  and therefore,
  \begin{equation}
    \rho^{01}_{pq}
    = \contraction[1.2ex]{}{c}{{}^\dag_q}{c}
    c^\dag_q c_p
    = \frac{\braket{\Phi_0|c^\dag_q c_p|\Phi_1}}{\braket{\Phi_0|\Phi_1}}
    = \left[V^{1*} \left(\Xi^{-1}\right)^* V^{0\,\T}\right]_{pq}.
  \end{equation}
  We can show Equations~\eqref{eq:kappa01def} and \eqref{eq:kappa10def} similarly.
  We note that $\kappa^{01}$ and $\kappa^{10}$ are antisymmetric by definition.
\end{proof}

We now proceed to show Equation~(17).

\begin{proof}
  Given that $V^0$ and $V^1$ are non-singular, one can verify that
  \begin{equation}
    \tilde{\mathcal{S}}/(V^{0\,\T} V^{1*})
    = - \Xi^\dag
  \end{equation}
  in analogous to Equation~\eqref{eq:s_schur1}.
  Moreover, by block LDU decomposition and the Sherman--Morrison--Woodbury identity, we have
  \begin{subequations}
  \begin{align}
    \tilde{\mathcal{S}}^{-1}
    &= \begin{pmatrix}
      -V^{1\dag} V^0 & U^{1\dag} V^{1*}\\
      V^{0\,\T} U^0  & V^{0\,\T} V^{1*}
    \end{pmatrix}^{-1}\\
    &= \begin{pmatrix}
      \left(\tilde{\mathcal{S}}/(V^{0\,\T} V^{1*})\right)^{-1} & 0\\
      0 & \left(\tilde{\mathcal{S}}/(-V^{1\dag} V^0)\right)^{-1}
    \end{pmatrix}
    \begin{pmatrix}
      I_M & -U^{1\dag} V^{1*} \left(V^{0\,\T} V^{1*}\right)^{-1}\\
      -V^{0\,\T} U^0 \left(-V^{1\dag} V^0\right)^{-1} & I_M
    \end{pmatrix}\\
    &= \begin{pmatrix}
      -\left(\Xi^{-1}\right)^\dag & 0\\
      0 & \left(\Xi^{-1}\right)^*
    \end{pmatrix}
    \begin{pmatrix}
      I_M & -U^{1\dag} \left(V^{0\,\T}\right)^{-1}\\
      -U^{0\,\T} \left(V^{1\dag}\right)^{-1} & I_M
    \end{pmatrix}\\
    &= \begin{pmatrix}
      -\left(\Xi^{-1}\right)^\dag &
      -\left(\Xi^{-1}\right)^\dag U^{1\dag} \left(V^{0\,\T}\right)^{-1}\\
      -\left(\Xi^{-1}\right)^* U^{0\,\T} \left(V^{1\dag}\right)^{-1} &
      \left(\Xi^{-1}\right)^*.
    \end{pmatrix}
  \end{align}
  \end{subequations}
  Therefore,
  \begin{align}
    \mathcal{S}^{-1}
    = \tilde{\mathcal{S}}^{-1} \mathcal{J}^{-1}
    = \begin{pmatrix}
      -\left(\Xi^{-1}\right)^\dag U^{1\dag} \left(V^{0\,\T}\right)^{-1} &
      -\left(\Xi^{-1}\right)^\dag\\
      \left(\Xi^{-1}\right)^* &
      -\left(\Xi^{-1}\right)^* U^{0\,\T} \left(V^{1\dag}\right)^{-1}
    \end{pmatrix},
  \end{align}
  and
  \begin{align}
    \mathcal{K}
    = \mathcal{V} \mathcal{S}^{-1} \mathcal{V}^\T
    = \begin{pmatrix}
      - V^{1*} \left(\Xi^{-1}\right)^* U^{0\,\T} &
      - V^{1*} \left(\Xi^{-1}\right)^* V^{0\,\T}\\
      V^0 \left(\Xi^{-1}\right)^\dag V^{1\dag} &
      V^0 \left(\Xi^{-1}\right)^\dag U^{1\dag}
    \end{pmatrix}
    = \begin{pmatrix}
      -\kappa^{01} & -\rho^{01}\\
      \rho^{01\,\T} & \kappa^{10*}
    \end{pmatrix}.
  \end{align}
\end{proof}

\section{Proof of Equation~(18)}%

\begin{proof}
  By Equations~(12), (16), and (17), we have
  \begin{equation}
    \mathcal{M}/\mathcal{S}
    = \Gamma^{(-)} + \mathcal{C}^\dag \mathcal{V} \mathcal{S}^{-1} \mathcal{V}^\T \mathcal{C}^*
    = \Gamma^{(-)} + \mathcal{C}^\dag \mathcal{K} \mathcal{C}^*.
  \end{equation}
  Therefore, for $1 \leq k < l \leq 2d$,
  \begin{equation}
    \Gamma^{(-)}_{kl}
    = \sum_{pq} A^*_{pk} B^*_{ql}\braket{-|c_p c^\dag_q|-}
    = \sum_p A^*_{pk} B^*_{pl}
  \end{equation}
  and
  \begin{subequations}
  \begin{align}
    \left[\mathcal{M}/\mathcal{S}\right]_{kl}
    &= \sum_p A^*_{pk} B^*_{pl}
    + \sum_{pq}
    \begin{pmatrix}
      A^*_{pk} & B^*_{pk}
    \end{pmatrix}
    \begin{pmatrix}
      -\kappa^{01}_{pq} & -\rho^{01}_{pq}\\
      \rho^{01}_{qp} & \kappa^{10*}_{pq}
    \end{pmatrix}
    \begin{pmatrix}
      A^*_{ql} \\ B^*_{ql}
    \end{pmatrix}\\
    &= \sum_{pq}
    \left(
      - A^*_{pk} A^*_{ql} \kappa^{01}_{pq}
      + A^*_{pk} B^*_{ql} (\delta_{pq} - \rho^{01}_{pq}) 
      + B^*_{pk} A^*_{ql} \rho^{01}_{qp}
      + B^*_{pk} B^*_{ql} \kappa^{10*}_{pq}
    \right)\\
    &= \sum_{pq}
    \frac{
      \braket{\Phi_0|
        \left(A^*_{pk} c_p + B^*_{pk} c^\dag_p\right)
        \left(A^*_{ql} c_q + B^*_{ql} c^\dag_q\right)
      |\Phi_1}
    }{\braket{\Phi_0|\Phi_1}}\\
    &= \frac{\braket{\Phi_0|\gamma^k \gamma^l|\Phi_1}}{\braket{\Phi_0|\Phi_1}}\\
    &= \Gamma_{kl}.
  \end{align}
  \end{subequations}
  By their antisymmetry, we have $\mathcal{M}/\mathcal{S} = \Gamma$.
\end{proof}

\section{Proof of the Robust Wick's Theorem}%

\begin{lemma}
  \label{lemma:cancellation}
  For the antisymmetric matrix $\tilde{\Gamma}^r$ defined in Equation~(27), we have
  \begin{align}
    \label{eq:gammat_ident}
    \tilde{\Gamma}^r_{k_1 k_2} \tilde{\Gamma}^r_{k_3 k_4} 
    - \tilde{\Gamma}^r_{k_1 k_3} \tilde{\Gamma}^r_{k_2 k_4} 
    + \tilde{\Gamma}^r_{k_1 k_4} \tilde{\Gamma}^r_{k_2 k_3} 
    = 0.
  \end{align}
\end{lemma}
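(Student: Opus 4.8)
\section*{Proof proposal}

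The plan is to recognize the left-hand side as a $4 \times 4$ Pfaffian and then exploit a low-rank structure of $\tilde{\Gamma}^r$. Specializing the definition of the Pfaffian in Equation~\eqref{eq:pfaffian} to $d = 2$, the three monomials appearing in Equation~\eqref{eq:gammat_ident} are exactly the three perfect matchings of four indices with their matching signs, so that
\[
  \tilde{\Gamma}^r_{k_1 k_2} \tilde{\Gamma}^r_{k_3 k_4}
  - \tilde{\Gamma}^r_{k_1 k_3} \tilde{\Gamma}^r_{k_2 k_4}
  + \tilde{\Gamma}^r_{k_1 k_4} \tilde{\Gamma}^r_{k_2 k_3}
  = \pf\left( \left[ \tilde{\Gamma}^r_{k_i k_j} \right]_{i,j=1}^{4} \right).
\]
It therefore suffices to show that every $4 \times 4$ antisymmetric submatrix of $\tilde{\Gamma}^r$ has vanishing Pfaffian.

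The key structural input is that $\tilde{\Gamma}^r$, as constructed in Equation~(27), is a rank-two antisymmetric matrix. I would first read off from Equation~(27) two vectors $a^r$ and $b^r$ for which $\tilde{\Gamma}^r_{kl} = a^r_k b^r_l - a^r_l b^r_k$ for all $k,l$; this outer-product difference is manifestly antisymmetric and of rank at most two. Establishing this decomposition explicitly is the step I expect to require the most care, since it amounts to tracing the $r$-dependence back through the definition of $\tilde{\Gamma}^r$ and verifying that only a single pair of vectors survives.

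Given the decomposition, I would conclude in either of two equivalent ways. The clean route is to invoke the standard fact that the rank of an antisymmetric matrix equals the size of its largest nonvanishing Pfaffian minor; since $\tilde{\Gamma}^r$ has rank at most two, every $4 \times 4$ Pfaffian minor must vanish, which is precisely Equation~\eqref{eq:gammat_ident}. The self-contained route is to substitute $\tilde{\Gamma}^r_{k_i k_j} = a^r_{k_i} b^r_{k_j} - a^r_{k_j} b^r_{k_i}$ directly into the left-hand side; expanding the three products yields six distinct monomials of the form $a^r_{k_i} a^r_{k_j} b^r_{k_m} b^r_{k_n}$, each of which occurs exactly twice with opposite signs, so the expression cancels term by term. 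This verification is short and uses nothing beyond the rank-two form, so the entire difficulty of the lemma is concentrated in exhibiting that form from Equation~(27).
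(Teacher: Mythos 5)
Your proposal is correct and takes essentially the same route as the paper: the paper's proof likewise establishes the rank-two decomposition, $\tilde{\Gamma}^r_{k_1 k_2} = \tilde{B}^*_{r k_1} \tilde{A}^*_{r k_2} - \tilde{A}^*_{r k_1} \tilde{B}^*_{r k_2}$ with $\tilde{\mathcal{C}} = \mathcal{L}^\dag \mathcal{C}$, and then concludes by exactly the term-by-term algebraic cancellation you describe as your self-contained route. Your Pfaffian-minor framing is a valid alternative wrapper, but the substance of both arguments is identical, including your correct observation that the real work lies in extracting the rank-two form from Equation~(27).
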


\begin{proof}
  Define
  \begin{equation}
    \tilde{\mathcal{C}}
    = \begin{pmatrix}
      \tilde{A} \\ \tilde{B}
    \end{pmatrix}
    = \begin{pmatrix}
      L_{11} & L_{12}\\
      L_{21} & L_{22}
    \end{pmatrix}^\dag
    \begin{pmatrix}
      A \\ B
    \end{pmatrix}
    = \mathcal{L}^\dag \mathcal{C}.
  \end{equation}
  It is readily shown that
  \begin{equation}
    \tilde{\Gamma}^r_{k_1 k_2}
    = \tilde{B}^*_{r k_1} \tilde{A}^*_{r k_2} - \tilde{A}^*_{r k_1} \tilde{B}^*_{r k_2}.
  \end{equation}
  One can then verify Equation~\eqref{eq:gammat_ident} by straightforward algebraic manipulations.
\end{proof}

We now prove the robust Wick's theorem (Equation~(28)) starting from
the standard nonorthogonal Wick's theorem (Equation~(3)), assuming
$\Gamma^{(-)} = 0_{2d \times 2d}$.

\begin{proof}
  Let us first assume $\braket{\Phi_0|\Phi_1} \neq 0$, so the elements of $s$ defined
  in Equations~(21) and (22) are nonzero. By Equation~(27)
  and the definition of the Pfaffian (Equation~\eqref{eq:pfaffian}), we have
  \begin{align}
    \label{eq:pf_gamma1}
    \pf(\Gamma)
    = \frac{1}{2^d d!} \sum_{r_1 r_2 \cdots r_d} s_{r_1}^{-1} s_{r_2}^{-1} \cdots s_{r_d}^{-1}
    \sum_{\sigma \in S_{2d}} \text{sgn}(\sigma)
    \tilde{\Gamma}^{r_1}_{\sigma(1) \sigma(2)}
    \tilde{\Gamma}^{r_2}_{\sigma(3) \sigma(4)}
    \cdots
    \tilde{\Gamma}^{r_d}_{\sigma(2d-1) \sigma(2d)}.
  \end{align}

  By group properties, the mapping
  \begin{align*}
    f_1: S_{2d} &\to S_{2d}\\
    \sigma &\mapsto \sigma' = \sigma \cdot (23)
  \end{align*}
  is bijective, where $(23)$ denotes the transposition between $2$ and $3$.
  Similarly,
  \begin{align*}
    f_2: S_{2d} &\to S_{2d}\\
    \sigma &\mapsto \sigma'' = \sigma \cdot (24)
  \end{align*}
  is also bijective.
  
  For a given summant in the outer summation in Equation~\eqref{eq:pf_gamma1}, if $r_1 = r_2 = r$,
  we have
  \begin{subequations}
  \begin{align}
    \nonumber
    &\quad\sum_{\sigma \in S_{2d}} \text{sgn}(\sigma)
    \tilde{\Gamma}^{r}_{\sigma(1) \sigma(2)}
    \tilde{\Gamma}^{r}_{\sigma(3) \sigma(4)}
    \tilde{\Gamma}^{r_3}_{\sigma(5) \sigma(6)}
    \cdots
    \tilde{\Gamma}^{r_d}_{\sigma(2d-1) \sigma(2d)}\\
    \nonumber
    &= \frac{1}{3} \sum_{\sigma \in S_{2d}} 
    \left(
      \text{sgn}(\sigma)
      \tilde{\Gamma}^{r}_{\sigma(1) \sigma(2)}
      \tilde{\Gamma}^{r}_{\sigma(3) \sigma(4)}
      + \text{sgn}(\sigma')
      \tilde{\Gamma}^{r}_{\sigma'(1) \sigma'(2)}
      \tilde{\Gamma}^{r}_{\sigma'(3) \sigma'(4)}
      + \text{sgn}(\sigma'')
      \tilde{\Gamma}^{r}_{\sigma''(1) \sigma''(2)}
      \tilde{\Gamma}^{r}_{\sigma''(3) \sigma''(4)}
    \right)\\
    &\qquad\qquad\qquad \times
      \tilde{\Gamma}^{r_3}_{\sigma(5) \sigma(6)}
      \cdots
      \tilde{\Gamma}^{r_d}_{\sigma(2d-1) \sigma(2d)}\\
    \nonumber
    &= \frac{1}{3} \sum_{\sigma \in S_{2d}} \text{sgn}(\sigma)
    \left(
      \tilde{\Gamma}^{r}_{\sigma(1) \sigma(2)}
      \tilde{\Gamma}^{r}_{\sigma(3) \sigma(4)}
      -
      \tilde{\Gamma}^{r}_{\sigma(1) \sigma(3)}
      \tilde{\Gamma}^{r}_{\sigma(2) \sigma(4)}
      -
      \tilde{\Gamma}^{r}_{\sigma(1) \sigma(4)}
      \tilde{\Gamma}^{r}_{\sigma(3) \sigma(2)}
    \right)\\
    &\qquad\qquad\qquad \times
      \tilde{\Gamma}^{r_3}_{\sigma(5) \sigma(6)}
      \cdots
      \tilde{\Gamma}^{r_d}_{\sigma(2d-1) \sigma(2d)}\\
    \nonumber
    &= \frac{1}{3} \sum_{\sigma \in S_{2d}} \text{sgn}(\sigma)
    \left(
      \tilde{\Gamma}^{r}_{\sigma(1) \sigma(2)}
      \tilde{\Gamma}^{r}_{\sigma(3) \sigma(4)}
      -
      \tilde{\Gamma}^{r}_{\sigma(1) \sigma(3)}
      \tilde{\Gamma}^{r}_{\sigma(2) \sigma(4)}
      +
      \tilde{\Gamma}^{r}_{\sigma(1) \sigma(4)}
      \tilde{\Gamma}^{r}_{\sigma(2) \sigma(3)}
    \right)\\
    &\qquad\qquad\qquad \times
      \tilde{\Gamma}^{r_3}_{\sigma(5) \sigma(6)}
      \cdots
      \tilde{\Gamma}^{r_d}_{\sigma(2d-1) \sigma(2d)}\\
    &= 0,
  \end{align}
  \end{subequations}
  where we have used Lemma~\ref{lemma:cancellation}.

  Therefore, this outer summation can be restricted to distinct $r_1, r_2, \cdots, r_d$ without affecting
  the result. It follows from the nonorthogonal Wick's theorem and the definition of the hyper-Pfaffian
  (Equation~\eqref{eq:hyperpfaffian}, see also Equation~\eqref{eq:hyperpfaffian_special}) that
  \begin{subequations}
  \begin{align}
    \nonumber
    &\quad\:\braket{\Phi_0|\gamma^1 \gamma^2 \cdots \gamma^{2d}|\Phi_1}\\
    &= \braket{\Phi_0|\Phi_1} \pf(\Gamma)\\
    &= \frac{1}{2^d}\: \zeta \prod_r s_r 
    \sum_{r_1 < r_2 < \cdots < r_d} s_{r_1}^{-1} s_{r_2}^{-1} \cdots s_{r_d}^{-1}
    \sum_{\sigma \in S_{2d}} \text{sgn}(\sigma)\,
    \tilde{\Gamma}^{r_1}_{\sigma(1) \sigma(2)}
    \tilde{\Gamma}^{r_2}_{\sigma(3) \sigma(4)}
    \cdots
    \tilde{\Gamma}^{r_d}_{\sigma(2d-1) \sigma(2d)}\\
    &= \frac{(2d)!}{2^d}\: \zeta \sum_{r_1 < r_2 < \cdots < r_d} \lambda_{r_1 r_2 \cdots r_d}\:
    \hpf(\tilde{\Gamma}^{r_1} \otimes \tilde{\Gamma}^{r_2} \otimes \cdots \otimes \tilde{\Gamma}^{r_d})\\
    &= \frac{(2d)!}{2^d d!}\: \zeta \sum_{r_1 r_2 \cdots r_d} \lambda_{r_1 r_2 \cdots r_d}\:
    \hpf(\tilde{\Gamma}^{r_1} \otimes \tilde{\Gamma}^{r_2} \otimes \cdots \otimes \tilde{\Gamma}^{r_d})\\
    &= (2d-1)!!\: \zeta \sum_{r_1 r_2 \cdots r_d} \lambda_{r_1 r_2 \cdots r_d}\:
    \hpf(\tilde{\Gamma}^{r_1} \otimes \tilde{\Gamma}^{r_2} \otimes \cdots \otimes \tilde{\Gamma}^{r_d})
    \label{eq:robust_wick2}
  \end{align}
  \end{subequations}
\end{proof}

All of the $s_r^{-1}$ factors get cancelled out in this final expression; therefore, as
$\braket{\Phi_0|\Phi_1} \to 0$, or equivalently $s_r \to 0$ for some $1 \leq r \leq M$,
Equation~\eqref{eq:robust_wick2} still yields a well-defined value of the matrix element.
Namely, the robust Wick's theorem is valid regardless of whether $\ket{\Phi_0}$ and $\ket{\Phi_1}$
are orthogonal or not.

As noted in the main text, Lemma~\ref{lemma:cancellation} and the properties of the
hyper-Pfaffian guarantee the cancellation of self-interaction. The definition of
$\lambda_{r_1 r_2 \cdots r_d}$ in Equation~(29) further enforces this cancellation numerically.

\section{Hartree--Fock Limit}

Following our robust normalization procedure in Section~V,
Equation~(42) reduces to
\begin{align}
  \bar{U}^{j\prime}
  = 0_{M_O \times M_O},
  \quad
  \bar{V}^{j\prime}
  = I_{M_O}
\end{align}
in the Hartree--Fock (HF) limit,
where $M_O = M^0_C = M^1_C$ is the number of occupied spin-orbitals in the
Slater determinant $\ket{\Phi'_0}$ or $\ket{\Phi'_1}$.
It follows from Equation~(41) that
\begin{equation}
  U^{j\prime} = 0_{M \times M_O}, \quad
  V^{j\prime} = \left(D^{j\prime}\right)^*,
\end{equation}
where columns of $D^{j\prime}$ are the coefficients of the occupied canonical orbitals.
According to Equation~(44), we have
\begin{equation}
  \mathcal{S}'
  = \begin{pmatrix}
    0_{M_O \times M_O} &
    \left(D^{0\prime}\right)^\dag D^{1\prime}\\
    -\left(D^{1\prime}\right)^\T \left(D^{0\prime}\right)^* &
    0_{M_O \times M_O}
  \end{pmatrix}.
\end{equation}
Note that $\left(D^{0\prime}\right)^\dag D^{1\prime}$ is the orbital overlap matrix,
whose singular value decomposition (SVD) is given by
\begin{equation}
  \left(D^{0\prime}\right)^\dag D^{1\prime}
  = X \, \text{diag}(s') \, Y^\dag,
\end{equation}
where elements of $s'$ are the singular values, and $X$ and $Y$ are the
left- and right-singular vectors. It is readily seen that
\begin{equation}
  \mathcal{S}'
  = \mathcal{Q}' \bar{\mathcal{S}}' \mathcal{Q}^{\prime\T}
\end{equation}
with
\begin{equation}
  \bar{\mathcal{S}}'
  = \begin{pmatrix}
    0_{M_O \times M_O} & \text{diag}(s')\\
    -\text{diag}(s') & 0_{M_O \times M_O}
  \end{pmatrix}, \quad
  \mathcal{Q}'
  = \begin{pmatrix}
    0_{M_O \times M_O} & X\\
    -Y^* & 0_{M_O \times M_O}
  \end{pmatrix}.
\end{equation}
Namely, the canonical values of $\mathcal{S}'$ reduces to the singular values of the orbital overlap matrix
in the HF limit.

Moreover, in this limit, Equation~(26) reduces to
\begin{equation}
  \mathcal{L}'
  = \begin{pmatrix}
    L'_{11} & L'_{12}\\
    L'_{21} & L'_{22}
  \end{pmatrix}
  = \mathcal{V}' \mathcal{Q}^{\prime *},
\end{equation}
where
\begin{equation}
  L'_{12} = L'_{21} = 0_{M_O \times M_O}, \quad
  L'_{11} = D^{1\prime} Y, \quad 
  L'_{22} = \left(D^{0\prime} X\right)^*.
\end{equation}
One can then easily verify that the $1$- and $2$-RTM expressions in \cite{Koch1993LinearSuperposition}
are reproduced by the robust Wick's theorem.

\bibliographystyle{nsf}
\bibliography{ref}